\newcommand\new[0]{\reflectbox{\ensuremath{\mathsf{N}}}}
\newcommand{\tf}[1]{\ensuremath \mathtt{#1}}
\newcommand{\nom}[1]{\mathscr{#1}}
\newcommand{\nalg}[1]{\mathfrak{#1}}
\gdef\scalefactor{#1}\begin{center}\proofSkipAmount \leavevmode}%
\scalebox{\scalefactor}{\DisplayProof}\proofSkipAmount \end{center} }
\newcommand{\Int}[3]{\llbracket #1 \rrbracket^{#2}_{#3}} 
\newcommand{\COREf}{{\sf CORE}_{\fix{}}}
\newcommand{\A}{\mathbb{A}}
\newcommand{\C}{\ensuremath\mathtt{C}}
\newcommand{\D}{\mathsf{D}}
\newcommand{\F}{\mathbb{F}}
\newcommand{\V}{\mathbb{V}}
\newcommand{\T}{{\tt T}}
\newcommand{\Fix}[1]{{\tt pfix}(#1)}
\newcommand{\fix}[1]{\curlywedge_{#1}} 
\newcommand{\atom}{\mathtt{atom}}
\newcommand{\abs}{\mathtt{abs}}
\newcommand{\nfix}[1]{\not\mathrel{\curlywedge}_{#1}} 
\newcommand{\falphaeq}[1]{\overset{\fix{}}{\approx}_{\tt  #1}} 
\newcommand{\dom}[1]{\mathtt{dom}(#1)} 
\newcommand{\perm}[1]{\mathtt{perm}(#1)} 
\newcommand{\lin}[1]{\overline{#1}} 
\newcommand{\var}[1]{\mathtt{Var}(#1)} 
\newcommand{\act}{\cdot} 
\newcommand{\supp}[2]{\mathtt{supp}_{#1}(#2)} 
\newcommand{\id}{\mathtt{id}} 
\newcommand{\atm}[1]{\mathtt{Atm}(#1)} 
\newcommand{\Perm}[1]{{\mathtt{Perm}}(#1)} 
\newcommand{\fresh}{\#} 
\newcommand{\pow}[2]{\mathcal{P}_{#1}(#2)} 
\newcommand{\pos}[1]{\mathtt{Pos}(#1)}
\newcommand{\stupsilon}[2]{\Upsilon_{#1,#2}}
\newcommand{\uprob}{{\sf Pr}}
\newcommand{\eprob}[1]{{\sf Pr}_{\tt #1}}
\newcommand{\daniele}[1]{\textcolor{cyan}{#1}}
\newcommand{\ali}[1]{\textcolor{red!50!blue!50}{#1}}
\newtheorem{atheorem}{Theorem}[section]
\newtheorem{alemma}{Lemma}[section]
\begin{document}
\begin{frontmatter}
  \title{A Strong Nominal Semantics for Fixed-point Constraints} 						
\thanks[ALL]{Partially supported by Brazilian FAP-DF Project DE 00193.00001175/2021-11, Brazilian CNPq Project Universal 409003/2021-2}   
  \author{Ali K. Caires-Santos \thanksref{a}\thanksref{myemail}}
   \author{Maribel Fern\'andez\thanksref{b}\thanksref{coemail}}
   \author{Daniele  Nantes-Sobrinho\thanksref{a}\thanksref{c}\thanksref{ccoemail}}	

   \address[a]{Department of Mathematics, University of Bras\'ilia, Bras\'ilia, Brazil}  							
   \thanks[myemail]{Email: \href{mailto:A.K.C.R.Santos@mat.unb.br} {\texttt{\normalshape
    A.K.C.R.Santos@mat.unb.br}}} 
  \address[b]{Department of Informatics, King's College London, London, UK} 
  \thanks[coemail]{Email: \href{mailto:maribel.fernandez@kcl.ac.uk} {\texttt{\normalshape
        maribel.fernandez@kcl.ac.uk}}}
\address[c]{Department of Computing, Imperial College London, London, UK} 
  \thanks[ccoemail]{Email:  \href{mailto:d.nantes-sobrinho@imperial.ac.uk} {\texttt{\normalshape
        d.nantes-sobrinho@imperial.ac.uk}}}
\begin{abstract} 
Nominal algebra
includes $\alpha$-equality and freshness constraints on nominal terms endowed with a nominal set semantics that facilitates reasoning about languages with binders. Nominal unification is decidable and unitary, however, its extension with equational axioms such as Commutativity (which is finitary) is no longer finitary unless permutation fixed-point constraints are used.  
In this paper, we extend the notion of nominal algebra by introducing fixed-point constraints and provide a sound 
semantics using strong nominal sets.  We show, by providing a counter-example, that the class of nominal sets is not a sound denotation for this extended nominal algebra. 
To recover soundness we propose two different formulations of nominal algebra, one obtained by restricting to a class of fixed-point contexts that are in direct correspondence with freshness contexts and another obtained by using a different set of derivation rules.
\end{abstract}
\begin{keyword}
Programming languages, semantics, binding, nominal algebra
\end{keyword}
\end{frontmatter}

\section{Introduction}
\label{sec:introduction}

Nominal algebra~\cite{DBLP:journals/logcom/GabbayM09,DBLP:journals/logcom/Gabbay09} provides a sophisticated framework to interpret and reason algebraically about systems with binding. The term language of nominal algebra is an extension of first-order terms, called nominal terms~\cite{DBLP:journals/tcs/UrbanPG04}. Nominal syntax includes \emph{names} (representing object-level variables, or \emph{atoms}), along with meta-level variables (called just \emph{variables}) and  constructs to specify binding (nominal \emph{abstraction}). The $\alpha$-equivalence relation between nominal terms is defined using name \emph{permutations}  and an auxiliary \emph{freshness} relation. 

Gabbay and Mathijssen~\cite{DBLP:journals/logcom/GabbayM09} define a sound and complete proof system for nominal algebra, where derivation rules are subject to freshness conditions $a\fresh t$ (read “$a$ is fresh for $t$”~\cite{DBLP:journals/tcs/UrbanPG04}).
Nominal algebra with freshness constraints has a semantics in the class of nominal sets (sets with a finitely supported permutation action) which have a rich structure to reason about names, the permutation of names, and name binding.  This approach to the specification of syntax with binding has many advantages: Matching and unification modulo $\alpha$-equivalence are decidable~\cite{DBLP:journals/tcs/UrbanPG04} and moreover there are efficient algorithms to compute unique most general unifiers (pairs of a freshness context and a substitution) for solvable unification problems~\cite{DBLP:journals/jcss/CalvesF10,DBLP:conf/rta/LevyV10}. These algorithms  have been used to implement rewriting,  functional languages and logic programming languages (see, e.g., \cite{DBLP:journals/iandc/FernandezG07,DBLP:conf/icfp/ShinwellPG03,DBLP:conf/iclp/CheneyU04}). Disunification problems are also decidable~\cite{DBLP:conf/lsfa/Ayala-RinconFNV20}, as well as the more general class of equational problems~\cite{DBLP:conf/fossacs/Ayala-RinconFNV21,DBLP:journals/mscs/Ayala-RinconSFS21}.

Since applications in theorem proving and rewrite-based programming languages often require the use of associative-commutative operators, nominal unification algorithms have been extended to deal with equational axioms such as Associativity (A) and Commutativity (C)~\cite{AYALARINCON20193}. C-unification and AC-unification are decidable and finitary, however, although nominal C-unification and nominal AC-unification are decidable, they are not finitary when solutions are expressed using substitutions and freshness constraints.
To overcome this limitation, fixed-point constraints $\pi\fix{} t$ (read “the name permutation $\pi$ fixes $t$”) should be used: indeed,  nominal unification modulo C and AC theories is finitary if solutions are expressed using substitutions, freshness and  fixed-point constraints; moreover, freshness constraints can in turn be expressed using fixed-point constraints~\cite{DBLP:journals/lmcs/Ayala-RinconFN19}. 


The relation between freshness and fixed-point equations was initially proposed by Pitts~\cite{book/Pitts}, by the equivalence $a\# x \iff \new c. (a \ c)\cdot x=x$, where $a$ and $c$ are atoms and $x$ is an element of a nominal set. The equivalence uses the $\new$ (`new') quantifier that quantifies over new/fresh names. Thus, the equivalence means that $a$ is fresh for $x$ if and only if swapping $a$  for any new name $c$ in $x$ does not alter $x$, in other words, the action of the permutation $(a \ c)$ fixes $x$. The fixed-point constraints $\pi \fix{} t$ are more general, since $\pi$ is not restricted to be of the form $(a\ c)$ for a new name $c$. For example, depending on the chosen model, a fixed-point constraint $(a \ b)\fix{} X$ does not necessarily imply  $a\#X$. Thus, the semantics of general fixed-point constraints differs from the semantics of freshness constraints and it needs to be determined so that the meaning of the developments in nominal techniques using fixed-point constraints can be established.


In this paper we address this gap. We consider nominal algebras that use fixed-point constraints instead of freshness constraints.  Initially, the semantics  for nominal algebras with fixed-point constraints was expected to use nominal sets as carriers. However, through the analysis presented in this paper, we show that this assumption is not valid by providing counterexamples.
We also show that soundness can be recovered by using a  subclass of the class of nominal sets, called strong nominal sets,  or by strengthening the derivation rules using a restricted class of fixed-point contexts and the $\new$ quantifier from nominal logic. 

Summarising, our main contributions are:
\begin{enumerate}
    \item A novel semantic interpretation of general fixed-point constraints.     
    \item The definition of a notion of \emph{strong nominal algebra} as well as notions of strong axioms and theories.    
    \item A counter-example showing that the class of nominal sets is not a sound denotation for nominal theories using general fixed-point constraints.
   \item A proof that strong nominal sets are a sound denotation for nominal theories with fixed-point constraints. 
    \item An alternative formulation of a nominal theory with fixed-point constraints, whose denotational semantics is the whole class of nominal sets.
    \item An analysis of the correspondences between approaches with freshness and fixed-point constraints. 
\end{enumerate}

The above results open the way for a generalisation of the algorithms used to check nominal equational problems  to the case where the signature includes associative and commutative operators. Recall that equational problems are first-order formulas where the only predicate is equality. They have numerous applications in programming and theorem proving (see, e.g., \cite{DBLP:journals/jsc/ComonL89,DBLP:conf/birthday/Comon91,DBLP:journals/aaecc/Fernandez92,FernandezM:ACcom,KirchnerC:antpmm}). 
Extensions to nominal syntax with $\alpha$-equality and freshness constraints have already been studied~\cite{DBLP:conf/fossacs/Ayala-RinconFNV21}, but in order to consider syntax modulo equational theories and preserve the properties of the algorithms (in particular to preserve finitary unification) fixed-point constraints are needed, together with a corresponding notion of nominal algebra that can provide sound denotations for the problems. In future work, we will develop algorithms to solve nominal equational problems modulo associative and commutative axioms. 

\paragraph*{Organisation.} Section~\ref{sec:prelim} contains the basic definitions and notations used throughout the paper. Section~\ref{sec:semantics} establishes the semantics of fixed-point constraints, as well as 
the failure of soundness for standard derivation rules. Section~\ref{sec:recovering1} characterises a class of models for which soundness holds. Section~\ref{sec:recovering} discusses alternative approaches to recover soundness. Applications of the results are discussed in Section~\ref{sec:applications}. Section~\ref{sec:conclusion} concludes the paper. 

\section{Preliminaries}\label{sec:prelim}

\paragraph{Syntax.}

Fix disjoint countably infinite collections $\mathbb{A} = \{a,b,c,\ldots\}$ of {\em atoms} and  $\V = \{X,Y,Z,\ldots\}$ of {\em unknowns} or {\em variables}. We consider a {\em signature} $\Sigma$, finite set of {\em term-formers} ($\tf{f},\tf{g},\ldots$) - disjoint from  $\A$ and $\V$ - such that  each $\tf{f}\in \Sigma$ has a fixed arity, say $n$, which is a non-negative integer; we write $\tf{f}:n$ to denote that $\tf{f}$ has arity $n$.  Unless stated otherwise, we follow Gabbay’s \emph{permutative convention} \cite{DBLP:journals/fac/GabbayP02}: atoms $a, b$ range permutatively over $\A$ therefore saying two atoms $a$ and $b$ are distinct is unnecessary. We use {\em permutations}  to be able to rename atoms, for the purpose of $\alpha$-conversion. A {\em finite} permutation $\pi$ of atoms is a bijection $\A \to \A$ such that the set $\dom{\pi} := \{a \in\A \mid \pi(a) \neq a\}$ is finite; we say that $\pi$ has {\em finite domain/support}. Write $\id$ for the {\em identity permutation}, $\pi\circ \pi'$ for the {\em composition} of  permutations $\pi$ and $\pi'$, and $\pi^{-1}$ for the {\em inverse} of $\pi$. This makes permutations into a group; we will write $\Perm{\A}$ to denote the group of finite permutations of atoms.

The syntax of \emph{nominal terms} is defined inductively by the following grammar:
\[ s,t::= a \mid \pi\cdot X \mid [a]t \mid \tf{f}(t_1,\ldots, t_n) \]
where $a$ is an {\em atom}, $[a]t$ denotes the \emph{abstraction} of the atom $a$ over the term $t$, ${\tt f} (t_1,\ldots,t_n)$ is the \emph{application} of the term-former ${\tt f}:n$ to a tuple $(t_1,\ldots,t_n)$ and $\pi\act X$ is a {\em suspension}, where $\pi$ is an atom permutation. Intuitively, $\pi\cdot X$ denotes that $X$ will get substituted for a term and then $\pi$ will permute the atoms in that term accordingly (i.e., the permutation $\pi$ is suspended). Suspensions of the form $\id\act X$ will be represented simply by $X$. The set of all nominal terms formed from a signature $\Sigma$ will be denoted by $\F(\Sigma,\A,\V)$ or simply $\F(\Sigma,\V)$ since $\A$ will be fixed. 

Abstractions will be used to represent binding operators. A fundamental example is the lambda operator from the $\lambda$-calculus. Terms in the $\lambda$-calculus will be represented using the signature $\Sigma = \{\tf{lam}: 1, \tf{app} : 2\}$. If one consider $\lambda$-variables as atoms, $\lambda$-terms can be inductively generated by the grammar:
\[
    e::= a\mid \tf{app}(e,e)\mid \tf{lam}([a]e)
\]
To sugar the notation, we write $\tf{app}(s, t)$ as $(s\ t)$ and $\tf{lam}([a] s)$ as $\lambda a.s$.

The lambda operator is just one example amongst many others. We can also use abstractions to represent various other binding operators, such as the existential ($\exists$) and universal ($\forall$) quantifiers of first-order logic, or  the ($\nu$) operator in the $\pi$-calculus.

We say that a nominal term is {\em ground} (in a fixed signature $\Sigma$) when it does not mention unknowns (and mentions only term-formers in $\Sigma$). These are inductively defined by the grammar:
\[
    g,h ::= a \mid [a]g \mid \tf{f}(g_1,\ldots,g_n)
\]
where here $\tf{f}$ ranges over elements of $\Sigma$. The set of all ground terms will be denoted by $\F(\Sigma)$. We define the set of \emph{free names} of a ground term $g$ inductively by $\tf{fn}(a):= \{a\}, \tf{fn}(\tf{f}(g_1,\ldots,g_n)) = \bigcup_{i=1}^n\tf{fn}(g_i),$ and $\tf{fn}([a]g) := \tf{fn}(g)\setminus\{a\}$.

Write $t \equiv u$ for {\em syntactic identity} of terms. We define $a\in t$ (reads: ``$a$ occurs in $t$'') inductively by:
    \begin{center}
        \AxiomC{\phantom{$\pi(a)\neq a$}}\UnaryInfC{$a\in a$}\DisplayProof \quad   \AxiomC{\phantom{$a\in t$}}\UnaryInfC{$a\in [a]t$}\DisplayProof \quad \AxiomC{$a\in t$}\UnaryInfC{$a\in [b]t$}\DisplayProof \quad  \AxiomC{$a\in \dom{\pi}$}\UnaryInfC{$a\in \pi\act X$}\DisplayProof \quad \AxiomC{$a\in t_i \quad (1\leq i\leq n)$}\UnaryInfC{$a\in {\tt f}(t_1,\ldots,t_n)$}\DisplayProof.
    \end{center}
When $a\in t$ does not hold, we write $a \notin t$ and say that ``$a$ does not occur in $t$''. Furthermore, we define $\atm{t} := \{a \mid a \in t\}$, the set of atoms that occur in $t$.

We define $X\in t$ (reads: ``$X$ occurs in $t$'') inductively by:
    \begin{center}
        \AxiomC{\phantom{$\pi(a)\neq a$}}\UnaryInfC{$X\in \pi\act X$}\DisplayProof \quad  \AxiomC{$X\in t$}\UnaryInfC{$X\in [b]t$}\DisplayProof \quad  \AxiomC{$X\in t_i \quad (1\leq i\leq n)$}\UnaryInfC{$X\in {\tt f}(t_1,\ldots,t_n)$}\DisplayProof.
    \end{center}
When $X \in t$ does not hold, we write $X \notin t$ and say that ``$X$ does not occur in $t$''. Furthermore, we define $\var{t} := \{X \mid X \in t\}$, the set of variables that occur in $t$. Terms whose set of variables is empty, that is, there is no occurrence of variables in their structures, are {\em ground terms}.

Write $(a\ b)$ for the permutation that {\em swaps} $a$ for $b$, i.e., it maps $a$ to $b$, $b$ to $a$ and all other $c$ to themselves, take $(a \ a) = \id$. To swap atoms $a$ and $b$ in a term $t$, we use the swapping $(a \ b)$ and write $(a \ b)\act t$ to denote the action of replacing $a$ for $b$ and vice-versa in $t$. An atom permutation is built as a list of swappings.  Given two permutations $\pi$ and $\rho$, the permutation $\pi^\rho = \rho\circ\pi\circ\rho^{-1}$ denotes the {\em conjugate} of $\pi$ with respect to $\rho$. \emph{Permutation action} on terms is given by: $\pi \act a = \pi(a), \pi \act (\pi' \act X) =
(\pi \circ \pi') \act X, \pi \act ([a]t) = [\pi(a)](\pi \act t)$, and $\pi \act {\tt f}(t_1, \ldots , t_n) = {\tt f}(\pi \act t_1, \ldots \pi \act t_n)$. 

\emph{Substitutions}, denoted by $\sigma$, are finite mappings from variables to terms, which extend homomorphically over terms.
Atoms can be abstracted by an abstraction  operator as in $[a]t$ but cannot be instantiated by a substitution, while unknowns cannot be abstracted but can be instantiated by a substitution.

\paragraph*{Constraints, Judgements and Theories.}

Differently from the standard nominal approach~\cite{book/Pitts,DBLP:journals/fac/GabbayP02,DBLP:journals/logcom/GabbayM09}, we do not use freshness constraints\footnote{Pairs of the form $a\#t$ meaning that ``$a$ does not occur free in $t$''.}, instead we consider the following two kinds of constraints:
\begin{itemize}
\item     A {\em fixed-point constraint} is a pair $\pi\fix{} t$ of a permutation $\pi$ and a nominal term $t$. A fixed-point constraint $\pi \fix{} X$ is called {\em primitive}. Write $\Upsilon$ and $\Psi$ for finite sets of primitive fixed-point constraints and call them {\em fixed-point contexts}.  $\Upsilon|_X$ denotes the restriction of $\Upsilon$ to $X$ and $\perm{\Upsilon|_X}$ is the set of permutations such that $\pi \fix{} X$ is in $\Upsilon$.

Intuitively, $\pi\fix{}t$ means that $t$ is a fixed-point of $\pi$ (i.e.,  the permutation  $\pi$ fixes $t$):  $\pi\cdot t=t$.

\item An $\alpha$-{\em equality constraint} (or just equality) is a pair $t=u$ where $t$ and $u$ are nominal terms.

Equality constraints will be used to state that two terms are \emph{provably equal}.
\end{itemize}

Accordingly, we consider two judgement forms:
    \begin{itemize}
        \item A {\em fixed-point judgement form} $\Upsilon \vdash \pi\fix{} t$ is a pair of a fixed-point context $\Upsilon$ and a fixed-point constraint $\pi\fix{} t$. 
        \item An {\em equality judgement form} $\Upsilon \vdash t=u$ is a pair of a fixed-point context $\Upsilon$ and an equality constraint $t=u$.
    \end{itemize}
    We may write $\emptyset \vdash \pi\fix{} t$ as$~\vdash \pi\fix{} t$. Similarly for $\emptyset \vdash t=u$.

\begin{definition}[Theory]
    A {\em theory} $\T = (\Sigma,Ax)$ consists of a signature $\Sigma$ and a (possibly) infinite set of equality judgement forms $Ax$, known as the {\em axioms}.
\end{definition}

\begin{example}
    $\COREf$ represents a theory with no axioms. More specifically, for any signature $\Sigma$, there exists one such theory. Hence, $\COREf^{\Sigma} = (\Sigma,\emptyset)$.
\end{example}

Derivation rules for fixed-point and $\alpha$-equality judgements are given in \Cref{fig:equality-rules1}. 
 Rules ($\fix{} {\bf a}$) and ($\fix{} {\tt f}$) are straightforward. In the rule ($\fix{} {\bf var}$), the condition $\dom{\pi^{\pi'^{-1}}} \subseteq \dom{\perm{\Upsilon|_X}}$ means that the atoms affected by the permutation $\pi^{\pi'^{-1}}$ are in the domain of the permutations that fix $X$. 
The rule ($\fix{} {\bf abs}$) states that $[a]t$ is a fixed-point of $\pi$ if $\pi$ fixes $(a\ c_1)\act t$, where $c_1$ is a fresh atom for $t$ (the latter is indicated by the fact that $(c_1 \ c_2)$ fixes $t$).
Rules $({\bf refl}), ({\bf symm})$ and $({\bf tran})$ ensure that equality is an equivalence relation,
and the rules $({\bf cong}[])$ and $({\bf cong}{\tt f})$ ensure that it is a congruence. The $({\bf ax_{\Upsilon'\vdash t=u}})$ rule instantiates axioms in derivations. In rule $({\bf ax})$ we use the notation $(\pi\act \Upsilon')\sigma = \{\pi'^\pi\fix{} \pi\act X\sigma \mid \pi'\fix{} X\in\Upsilon'\}$. The $({\bf perm})$ rule states that if swapping $a$ with a new name $c_1$ fixes $t$ and similarly swapping $b$ with a new name $d_1$ fixes $t$, then swapping $a$ and $b$ does not affect $t$.  
Finally, $({\bf fr})$ (read top-down) is a strengthening rule. Note that the $({\bf perm})$ rule together with the $(\fix{})$ rules implements $\alpha$-equivalence.
For example, we can prove $\vdash [a]a = [b]b$ as follows:
{\small 
\begin{mathpar} 
\inferrule*[RIGHT=({\bf perm})]{\inferrule*[RIGHT=(\mbox{$\fix{}$}{\bf abs})]{\inferrule*[RIGHT=(\mbox{$\fix{}$}{\bf a})]{(b \ c_4)\act b = c_4 \quad \text{ and } \quad (a \ c_1)(c_4) = c_4}{\vdash (a\ c_1)\fix{} (b \ c_4) \cdot  b}}{\vdash (a\ c_1)\fix{} [b]b} \\ \inferrule*[RIGHT=(\mbox{$\fix{}$}{\bf abs})]{\inferrule*[RIGHT=(\mbox{$\fix{}$}{\bf a})]{(b \ c_3)\act b = c_3 \quad \text{ and } \quad (b \ c_2)(c_3) = c_3}{\vdash (b\ c_2)\fix{} (b \ c_3)\cdot  b} }{\vdash (b\ c_2)\fix{} [b]b }}{\vdash (a\ b) \cdot [b]b = [b]b}
\end{mathpar}
}

In the following, $\Upsilon\vdash \pi\fix{} t$ denotes that $\pi\fix{} t$ is derivable using  at most the elements of $\Upsilon$ as assumptions.  We write $\Upsilon\not\vdash \pi\fix{}t$ when $\Upsilon \vdash \pi\fix{}t$ is not derivable. Similarly, for equality judgments we write $\Upsilon \vdash_{\T} t=u$ if $t = u$ can be derived  i) using assumptions from $\Upsilon$; ii)  for each occurrence of $({\bf ax}_{\Upsilon'\vdash t=u})$ used in the derivation, $(\Upsilon' \vdash t=u)\in Ax$; iii) only terms in the signature $\Sigma$ are mentioned in the derivation.

      \begin{figure}[!t]
     \centering

         \hrule
{\small 
      \begin{tabular}{lr}
         & \\
         \AxiomC{$\pi(a) = a$}
         \RightLabel{$(\fix{} \textbf{a})$}
         \UnaryInfC{$\Upsilon \vdash \pi\fix{} a$}
         \DisplayProof & \AxiomC{$\Upsilon \vdash \pi\fix{} t_1 \quad \ldots \quad \Upsilon \vdash \pi\fix{} t_n$}
         \RightLabel{$(\fix{} {\tt f})$}
         \UnaryInfC{$\Upsilon \vdash \pi\fix{} {\tt f}(t_1,\ldots,t_n)$}
         \DisplayProof\\[0.5cm]
         \AxiomC{$\dom{\pi^{\pi'^{-1}}}\subseteq \dom{\perm{\Upsilon|_X}}$}
         \RightLabel{$(\fix{} \textbf{var})$}
         \UnaryInfC{$\Upsilon \vdash \pi\fix{} \pi'\act X$}
         \DisplayProof & \AxiomC{$\Upsilon,\lin{(c_1 \ c_2)\fix{} \var{t}} \vdash \pi\fix{} (a \ c_1)\act t$}
         \RightLabel{$(\fix{} \textbf{abs})$}
         \UnaryInfC{$\Upsilon \vdash \pi\fix{} [a]t$}
         \DisplayProof \\
         & \\
     \end{tabular}\\

     \hdashrule{\textwidth}{.3pt}{3pt} \\

\bigskip

     \begin{tabular}{c}

         \AxiomC{}\RightLabel{$({\bf refl})$}\UnaryInfC{$\Upsilon \vdash t=t$}\DisplayProof \quad \AxiomC{$\Upsilon \vdash t=u$ }\RightLabel{$({\bf symm})$}\UnaryInfC{$\Upsilon \vdash u=t$}\DisplayProof \quad \AxiomC{$\Upsilon \vdash t=u$ }\AxiomC{$\Upsilon \vdash u=v$}\RightLabel{$({\bf tran})$}\BinaryInfC{$\Upsilon \vdash t=v$}\DisplayProof
         \\[0.5cm]
         \AxiomC{$\Upsilon \vdash (\pi\act\Upsilon')\sigma$}\RightLabel{$({\bf ax_{\Upsilon'\vdash t=u}})$}
         \UnaryInfC{$\Upsilon \vdash \pi\act t\sigma = \pi\act u\sigma$}
         \DisplayProof \qquad 
         \AxiomC{$\Upsilon \vdash t=u$}\RightLabel{$({\bf cong}[])$}\UnaryInfC{$\Upsilon \vdash [a]t = [a]u$}\DisplayProof \quad   \AxiomC{$\Upsilon \vdash t=u$}\RightLabel{$({\bf cong}{\tt f})$}\UnaryInfC{$\Upsilon \vdash {\tt f}(\ldots,t,\ldots) = {\tt f}(\ldots,u,\ldots)$}\DisplayProof\\[0.5cm]
         \AxiomC{$\Upsilon,\pi\fix{} X \vdash t=u \quad (\dom{\pi}\subseteq\dom{\perm{\Upsilon|_X}})$}\RightLabel{$({\bf fr})$}\UnaryInfC{$\Upsilon \vdash t=u$}\DisplayProof\\[0.5cm]
         \AxiomC{$\Upsilon,\lin{(c_1 \ c_2)\fix{}\var{t}} \vdash (a \ c_1)\fix{} t$}
         \AxiomC{$\Upsilon,\lin{(d_1 \ d_2)\fix{}\var{t}} \vdash (b \ d_1)\fix{} t$}\RightLabel{$({\bf perm})$}
         \BinaryInfC{$\Upsilon \vdash (a \ b)\act t = t$}\DisplayProof\\
         \\
     \end{tabular}
     }
\hrule
       \caption{Derivation rules; $c_1,c_2,d_1,d_2$ are fresh names. Also, $\lin{\pi \fix{}\var{t}} = \{\pi\fix{} X\mid X\in\var{t}\}$.}
       \label{fig:equality-rules1}
 \end{figure}


    


\subsection{Nominal Sets}

We recall some basic definitions on nominal sets~\cite{book/Pitts}.

 A {\em $\Perm{\A}$-set}, denoted by $\nom{X}$, is a pair $(|\nom{X}|,\act)$ consisting of an {\em underlying set} $|\nom{X}|$ and a {\em permutation action} $\cdot $, which is a group action on $|\nom{X}|$, i.e., an operation $\act:\Perm{\A}\times |\nom{X}|\to |\nom{X}|$ such that $\id\act x = x$ and $\pi\act (\pi'\act x) = (\pi\circ\pi')\act x$, for every $x\in |\nom{X}|$ and $\pi,\pi'\in \Perm{\A}$.  Sometimes, we will write $\pi\act_{\nom{X}} x$, when we want to make $\nom{X}$ clear. 

For $B \subseteq \A$ write $\Fix{B} = \{\pi\in\Perm{\A}\mid \forall a\in B.~\pi(a) = a\}$, that is, $\Fix{B}$ is the set of permutations that fix pointwise the elements of $B$. A set of atomic names $B\subseteq \A$ {\em supports} an element $x \in |\nom{X}|$ when for all permutations $\pi\in\Perm{\A}$, $\pi\in\Fix{B} \Longrightarrow \pi\act x = x$. Additionally, we say that $B$ {\em strongly supports} $x\in|\nom{X}|$ if for all permutations $\pi\in\Perm{\A}$, $\pi\in\Fix{B} \Longleftrightarrow \pi\act x = x$. 

A {\em nominal set} is a $\Perm{\A}$-set $\nom{X}$ all of whose elements are finitely supported. $\nom{X,Y,Z}$ will range over nominal sets. A nominal set is {\em strong} if every element is strongly supported by a finite set\footnote{The class of strong nominal sets is a subclass of the class of nominal sets}. 
The {\em  support} of an element $x\in|\nom{X}|$ of a nominal set $\nom{X}$ is defined as $\supp{}{x} = \bigcap\{B\mid \text{$B$ is finite and supports $x$}\}$. This implies that $\supp{}{x}$ is the {\em least} finite support of $x$. In the case where $\nom{X}$ is strong, if $x$ is strongly supported by a finite set $B\subseteq \A$, then $B = \supp{}{x}$.

For any nominal sets $\nom{X,Y}$, call a map $f: |\nom{X}|\to |\nom{Y}|$ {\em equivariant} when $\pi\act f(x) = f(\pi\act x)$ for all $\pi\in\Perm{\A}$ and $x\in |\nom{X}|$. In this case we write $f:\nom{X}\to\nom{Y}$. For instance, any constant map is easily an equivariant map. 


\begin{example}[Some simple nominal sets]\label{ex:nominal-and-strong-sets}
    \begin{enumerate}
        \item \label{ex:atom-set} The $\Perm{\A}$-set $(\A,\act)$ with the action $\pi\act_\A a = \pi(a)$ is a nominal set and $\supp{}{a} = \{a\}$.

        \item \label{ex:pow-fin-set} Consider the set $\pow{\tt fin}{\A} = \{B\subset\A \mid B \text{ is finite}\}$. Then the $\Perm{\A}$-set $(\pow{\tt fin}{\A},\act)$ with the action $\pi\act_{\pow{\tt fin}{\A}} B = \{\pi\act_\A a\mid a \in B\}$ is a nominal set and $\supp{}{B} = B$. Observe that $\pow{\tt fin}{\A}$ is not strong because if we take $B = \{a,b\}$ and $\pi = (a \ b)$, then $\pi\act B = B$ but $\pi\notin \Fix{B}$.

        \item \label{ex:singleton-set} The singleton set $\{\star\}$ equipped with the action $\pi\act \star = \star$ is a strong nominal set and $\supp{}{\star} = \emptyset$. 

        \item The set $\A^* = \bigcup \{a_1\cdots a_n\mid \forall i,j\in\{1,\ldots,n\}. a_i\in\A\wedge(j\neq i\Rightarrow a_j\neq a_i)\}$, that is, the set of finite words over distinct atoms, is a strong nominal set when equipped with the permutation action given by $\pi\act (a_1\cdots a_n) = \pi(a_1)\cdots\pi(a_n)$, and $\supp{}{a_1\cdots a_n} = \{a_1,\ldots,a_n\}$.

        \item \label{ex:ground-algebra} $\F(\Sigma)$ with the action $\pi\act a \equiv \pi(a), \pi \act [a]t \equiv [\pi(a)]\pi\act t$ and $\pi \act \tf{f}(t_1,\ldots,t_n)\equiv \tf{}(\pi\act t_1,\ldots,\pi \act t_n)$ forms a nominal set and $\supp{}{g} = \atm{g}$ for all $g\in \F(\Sigma)$. The relation $\sim$ defined by $g \sim g'$ iff $~\vdash_\T g = g'$ is an equivariant equivalence relation. The set $\F(\Sigma)/_{\sim}$ is a nominal set and $\supp{}{\lin{g}} = \bigcap\{\supp{}{g'} \mid g'\in\lin{g}\}$ for any $\lin{g} \in \F(\Sigma)/_{\sim}$ (see Proposition 2.30~\cite{book/Pitts}). We denote $\F(\Sigma)/_{\sim}$ by $\lin{\F}(\T,\Sigma)$. In the case where $\T = \emptyset$, the relation $\sim$ is the $\alpha$-equivalence relation between ground terms and $\supp{}{\lin{g}} = \tf{fn}(g)$ for all $\lin{g}\in \lin{\F}(\emptyset,\Sigma)$.
    \end{enumerate}
\end{example}

\section{Semantics}\label{sec:semantics}
In this section we provide a semantics for fixed-point constraints using nominal sets, and denote its semantic interpretation as $\fix{\tt sem}$. We then build upon the semantic definition of $\fix{}$ and define concepts such as (strong) $\Sigma$-algebras, valuations, interpretations, models, etc. These are necessary ingredients to  determine validity of judgements in terms of $\fix{\tt sem}$, and consequently, establish soundness of nominal algebra with fixed-point constraints.

Below we overload the symbol $=$ to denote equality between elements of a nominal set.


\subsection{Semantics for \texorpdfstring{$\fix{}$}{fix}}

 The semantics of fixed-point constraints $\fix{\tt sem}$ differs from $\#_{\tt sem}$ and it is defined in terms of fixed-point equations, and not in terms of the support. For an element of a nominal set  $x\in \nom{X}$, if  $ a\notin \supp{}{x}$,  we write 
$a\#_{\tt sem}\ x$. We refer to \cite{DBLP:journals/logcom/GabbayM09} for more details about the semantics of nominal algebra defined by freshness constraints. Below we define $\fix{\tt sem}$.

\begin{definition}\label{def:fix_sem}
    Let $\nom{X}$ be a nominal set, $x\in |\nom{X}|, a\in\A,$ and $\pi\in \Perm{\A}$. We write $\pi\fix{\tt sem} x$ to denote that $\pi\act x = x$.
\end{definition}

\begin{lemma}\label{lem:dom-supp-empty}
    Let $\nom{X}$ be a nominal set and $x\in|\nom{X}|$. If $\dom{\pi}\cap\supp{}{x} = \emptyset$ then $\pi\fix{\tt sem} x$.
\end{lemma}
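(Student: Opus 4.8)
The plan is to prove this directly from the definition of support and the definition of $\fix{\tt sem}$. Recall that $\supp{}{x}$ is a finite set that supports $x$, meaning that for every permutation $\rho$, if $\rho \in \Fix{\supp{}{x}}$ then $\rho \act x = x$. So it suffices to show that the hypothesis $\dom{\pi}\cap\supp{}{x} = \emptyset$ implies $\pi \in \Fix{\supp{}{x}}$, and then apply the supporting property with $\rho := \pi$ together with Definition~\ref{def:fix_sem} to conclude $\pi \fix{\tt sem} x$.

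The key step, then, is the set-theoretic observation that $\dom{\pi}\cap B = \emptyset$ implies $\pi \in \Fix{B}$ for any $B \subseteq \A$. This is immediate from unfolding definitions: by definition $\Fix{B} = \{\rho \mid \forall a \in B.\ \rho(a) = a\}$, and by definition $\dom{\pi} = \{a \mid \pi(a) \neq a\}$. Given $a \in B$, since $\dom{\pi}\cap B = \emptyset$ we have $a \notin \dom{\pi}$, i.e.\ $\pi(a) = a$. As $a \in B$ was arbitrary, $\pi \in \Fix{B}$. Taking $B := \supp{}{x}$, which is finite and supports $x$ by the definition of support recalled in the Preliminaries (and by Proposition~2.30 of \cite{book/Pitts}, $\supp{}{x}$ itself supports $x$ when $\nom{X}$ is a nominal set), we are done.

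I do not anticipate any real obstacle here; the statement is essentially a reformulation of the fact that the least support supports $x$, phrased in the $\fix{\tt sem}$ vocabulary rather than the freshness vocabulary. The only mild subtlety worth spelling out is that we must invoke that $\supp{}{x}$ is genuinely a support of $x$ (not merely the intersection of supports) — this is where finiteness of $x$'s support, i.e.\ that $\nom{X}$ is a nominal set, is used, so the hypothesis "$\nom{X}$ is a nominal set" is not vacuous. I would present the proof in three short lines: (1) fix the finite supporting set $B = \supp{}{x}$; (2) show $\pi \in \Fix{B}$ from the disjointness hypothesis; (3) conclude $\pi \act x = x$, hence $\pi \fix{\tt sem} x$.
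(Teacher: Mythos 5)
Your proof is correct and is exactly the argument the paper intends: the paper's own proof is the one-liner ``Direct from the definition of support,'' and your three steps (disjointness gives $\pi\in\Fix{\supp{}{x}}$, $\supp{}{x}$ supports $x$, hence $\pi\act x=x$) are precisely that unfolding, including the worthwhile remark that one must know the least support is itself a support.
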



\begin{proof}
    Direct from the definition of support.\qed
\end{proof}

However, the other direction does not hold in general: if we take  $x = \{a,b\}$ and $\pi= (a \ b)$ as in Example \ref{ex:nominal-and-strong-sets}(\ref{ex:pow-fin-set}), $\pi\fix{\tt sem} x$ but $\dom{\pi}\cap\supp{}{x} = \{a,b\}$. The converse does hold for strong nominal sets.

\begin{lemma}\label{thm:support-char}
    Let $\nom{X}$ be a nominal set and $x\in|\nom{X}|$. Then
     \[
    a\in \supp{}{x} \iff \{c\mid (a\ c) \fix{\tt sem}x\}~\text{ is  finite}
    \]
\end{lemma}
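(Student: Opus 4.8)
The plan is to prove both directions using the characterisation of $\supp{}{x}$ as the least finite support of $x$, together with the standard fact (Proposition 2.30-style reasoning in Pitts) that $a \in \supp{}{x}$ iff the set $\{c \mid (a\ c)\act x \neq x\}$ is infinite — equivalently, iff $\{c \mid (a\ c)\act x = x\}$ is cofinite, hence here I want the contrapositive phrased via finiteness. Concretely, I would split into ($\Leftarrow$) and ($\Rightarrow$).

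For the ($\Leftarrow$) direction, assume $a \notin \supp{}{x}$; I must show $\{c \mid (a\ c)\fix{\tt sem} x\}$ is infinite (hence not finite). Since $\A$ is infinite and $\supp{}{x}$ is finite, pick any $c \notin \supp{}{x} \cup \{a\}$. Then $\dom{(a\ c)} = \{a,c\}$ is disjoint from $\supp{}{x}$, so by \Cref{lem:dom-supp-empty} we get $(a\ c)\fix{\tt sem} x$. There are infinitely many such $c$, so the set in question contains a cofinite set and is infinite; contrapositive gives the direction.

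For the ($\Rightarrow$) direction, assume $a \in \supp{}{x}$ and, towards a contradiction, suppose $S := \{c \mid (a\ c)\fix{\tt sem} x\}$ is finite. Consider $B := \supp{}{x} \setminus \{a\}$, a finite set; I claim $B$ supports $x$, contradicting minimality of $\supp{}{x}$ (which would force $a \in B$). To prove $B$ supports $x$, take $\pi \in \Fix{B}$; I need $\pi \act x = x$. The standard trick: choose a fresh atom $c$ avoiding $\supp{}{x}$, $S$, $\dom{\pi}$, and the (finite) image of $\dom{\pi}$ under $\pi$ — possible since all these are finite. Then $\pi$ can be massaged: writing $\pi$ as a composition so that we compare $\pi\act x$ with $(a\ c)\act x$. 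More carefully, $(a\ c)\circ \pi \circ (a\ c)$ fixes $B \cup \{a\}$... actually the cleanest route is: since $c \notin \supp{}{x}$ and $c \neq a$, we have $c \notin S$ would need checking — instead I use that $(a\ c)$ with $c$ fresh satisfies $\dom{(a\ c)} \cap \supp{}{x} = \{a\}$, so $(a\ c)\act x$ need not equal $x$; rather I use the conjugation identity $\pi\act x = x \iff (a\ c)^{-1}\pi(a\ c) \act ((a\ c)\act x) = (a\ c)\act x$. The key observation is that $(a\ c)\act x$ is supported by $(a\ c)\act \supp{}{x} = (B \setminus\{c\}) \cup \{c\} = B \cup \{c\}$, hmm.

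Let me instead give the direct argument I would actually write: Suppose $a \in \supp{}{x}$ but $S$ is finite. Pick $c$ with $c \notin \supp{}{x}$, $c \notin S$, $c \neq a$. Since $\supp{}{x}$ is the least support and $\supp{}{(a\ c)\act x} = (a\ c)\act\supp{}{x}$ (equivariance of support), and $c \notin S$ means $(a\ c)\act x \neq x$, so in particular $a$ and $c$ genuinely "move" $x$. Now the real content: for \emph{any} $c' \notin \supp{}{x}\cup\{a\}$ with $c' \notin S$, we would have... but by \Cref{lem:dom-supp-empty}, every $c' \notin \supp{}{x}\cup\{a\}$ gives $(a\ c')\fix{\tt sem}x$, i.e. $c' \in S$! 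So $\A \setminus (\supp{}{x}\cup\{a\}) \subseteq S$, making $S$ cofinite, hence infinite — contradicting finiteness of $S$ \emph{unless} $a \notin \supp{}{x}$. This actually closes the direction immediately: if $a \in \supp{}{x}$ we still get $S$ cofinite? No — that can't be right since then the iff would collapse.

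The subtlety I must get right — and this is the main obstacle — is that \Cref{lem:dom-supp-empty} only requires $\dom{\pi} \cap \supp{}{x} = \emptyset$, so it applies to $(a\ c)$ only when \emph{both} $a \notin \supp{}{x}$ and $c \notin \supp{}{x}$. Thus when $a \in \supp{}{x}$, the lemma says nothing, and indeed $\{c \mid (a\ c)\fix{\tt sem}x\}$ can be finite (e.g. empty). So the correct ($\Rightarrow$) argument is the contrapositive: assume $S$ is infinite; show $a \notin \supp{}{x}$. If $S$ is infinite, pick $c \in S$ with $c \notin \supp{}{x}$ (possible: $S$ infinite, $\supp{}{x}$ finite). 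Then $(a\ c)\act x = x$ with $c \notin \supp{}{x}$. Now for any $\pi \in \Fix{\supp{}{x}\setminus\{a\}}$: if $a \notin \dom{\pi}$ done; otherwise $\pi(a) = b$ for some $b$, and I conjugate/compose with swaps to reduce to the case $\pi = (a\ b)$ with $b$ suitably fresh, using $(a\ c)\act x = x$ and transitivity of the swap action on fresh atoms (the orbit of $x$ under swaps fixing $\supp{}{x}\setminus\{a\}$ is controlled). Concluding $\pi\act x = x$, so $\supp{}{x}\setminus\{a\}$ supports $x$; by minimality $a \notin \supp{}{x}$. The delicate bookkeeping in this reduction — handling the case $\pi(a) \in \supp{}{x}$ versus $\pi(a) \notin \supp{}{x}$, and composing swaps without leaving $\Fix{\supp{}{x}\setminus\{a\}}$ — is where I expect to spend the most care; everything else is routine.
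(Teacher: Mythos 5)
Your $\Leftarrow$ direction (that finiteness of $S=\{c\mid (a\ c)\fix{\tt sem}x\}$ forces $a\in\supp{}{x}$) is correct and is exactly the paper's argument: argue by contraposition, note that every $c\notin\supp{}{x}\cup\{a\}$ satisfies $\dom{(a\ c)}\cap\supp{}{x}=\emptyset$, and apply Lemma~\ref{lem:dom-supp-empty} to conclude $S$ is cofinite. You also correctly catch, mid-proposal, the trap that Lemma~\ref{lem:dom-supp-empty} needs \emph{both} atoms of the swap outside the support, so it says nothing when $a\in\supp{}{x}$; your final formulation avoids that error.

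The $\Rightarrow$ direction is where there is a genuine gap. You reduce it to showing that if $S$ is infinite then $\supp{}{x}\setminus\{a\}$ supports $x$, but the essential step --- turning one good swap $(a\ c)\act x=x$ with $c\notin\supp{}{x}$ into $\pi\act x=x$ for an \emph{arbitrary} $\pi\in\Fix{\supp{}{x}\setminus\{a\}}$ --- is exactly what you defer as ``delicate bookkeeping,'' so the proof is not actually given at its only nontrivial point. The route is completable: set $b=\pi(a)$; injectivity forces $b\notin\supp{}{x}\setminus\{a\}$, and if $b\neq a$ then $(a\ b)\circ\pi\in\Fix{\supp{}{x}}$, so $\pi\act x=(a\ b)\act x$; finally $(a\ b)=(b\ c)\circ(a\ c)\circ(b\ c)$ with $b,c\notin\supp{}{x}$ gives $(a\ b)\act x=x$ from $(a\ c)\act x=x$ and Lemma~\ref{lem:dom-supp-empty} applied to $(b\ c)$. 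But note that the paper's own argument for this direction is far shorter and avoids all of this: assuming $a\in\supp{}{x}$, one shows directly that $S\subseteq\supp{}{x}$ --- if $(a\ c)\act x=x$ then equivariance of support gives $(a\ c)\act\supp{}{x}=\supp{}{(a\ c)\act x}=\supp{}{x}$, and since $a\in\supp{}{x}$ this forces $c=(a\ c)(a)\in\supp{}{x}$ --- so $S$ is contained in a finite set. You should either supply the conjugation argument in full or, better, replace your $\Rightarrow$ direction with this inclusion.
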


    

\begin{proof}
    The left-to-right implication follows by showing that $\{c\mid (a\ c) \fix{\tt sem}x\}\subseteq\supp{}{x}$. The right-to-left implication follows by contradiction and Lemma~\ref{lem:dom-supp-empty}.\qed
\end{proof}

\begin{theorem}\label{thm:strong-theorem}
     Let $\nom{X}$ be a strong nominal set, $x\in |\nom{X}|$, and $\pi,\gamma_1,\ldots,\gamma_n\in\Perm{\A}$. If $\gamma_i\fix{\tt sem} x$ for every $i=1,\ldots, n$, and $\dom{\pi}\subseteq\bigcup_{1\leq i \leq n} \dom{\gamma_i},$ then $\pi\fix{\tt sem} x$.
\end{theorem}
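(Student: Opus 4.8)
The plan is to exploit the characterization of the support of an element in a \emph{strong} nominal set: if $\gamma \fix{\tt sem} x$, i.e.\ $\gamma \act x = x$, then since $\nom{X}$ is strong and $\supp{}{x}$ strongly supports $x$, the biconditional $\gamma \in \Fix{\supp{}{x}} \iff \gamma\act x = x$ forces $\gamma \in \Fix{\supp{}{x}}$, that is, $\dom{\gamma}\cap\supp{}{x} = \emptyset$. Applying this to each hypothesis $\gamma_i \fix{\tt sem} x$ gives $\dom{\gamma_i}\cap\supp{}{x} = \emptyset$ for every $i = 1,\dots,n$, hence $\bigl(\bigcup_{1\leq i\leq n}\dom{\gamma_i}\bigr)\cap\supp{}{x} = \emptyset$.

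From the assumption $\dom{\pi}\subseteq\bigcup_{1\leq i\leq n}\dom{\gamma_i}$ we then conclude $\dom{\pi}\cap\supp{}{x} = \emptyset$. Now I would invoke Lemma~\ref{lem:dom-supp-empty} directly: since $\dom{\pi}\cap\supp{}{x} = \emptyset$, we get $\pi\fix{\tt sem} x$, which is exactly the desired conclusion. So the whole argument is essentially a two-line chain: strongness turns each $\gamma_i$-fixed-point equation into a disjointness-of-domains statement, the union of finitely many sets disjoint from $\supp{}{x}$ is still disjoint from $\supp{}{x}$, the domain-inclusion hypothesis transfers this disjointness to $\pi$, and Lemma~\ref{lem:dom-supp-empty} closes the gap back to a fixed-point equation.

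The only point that needs a little care is the very first step, where I use that in a strong nominal set $\supp{}{x}$ \emph{strongly} supports $x$. The excerpt states this explicitly (``In the case where $\nom{X}$ is strong, if $x$ is strongly supported by a finite set $B\subseteq \A$, then $B = \supp{}{x}$''), together with the definition that a strong nominal set has every element strongly supported by \emph{some} finite set; combining these, that finite set must be $\supp{}{x}$ itself, so $\supp{}{x}$ strongly supports $x$. I would spell this out in one sentence so that the use of the $\Longleftarrow$ direction of strong support (the direction that fails for ordinary nominal sets, as the $\pow{\tt fin}{\A}$ example shows) is transparent. I do not expect a genuine obstacle here; the finiteness of $n$ matters only insofar as a finite union of sets disjoint from $\supp{}{x}$ is disjoint from $\supp{}{x}$, which is immediate, so no appeal to finite support beyond what Lemma~\ref{lem:dom-supp-empty} already uses is required.
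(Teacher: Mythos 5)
Your proposal is correct and follows essentially the same route as the paper's proof: both use strongness to convert each hypothesis $\gamma_i\fix{\tt sem}x$ into $\gamma_i\in\Fix{\supp{}{x}}$ (equivalently, $\dom{\gamma_i}\cap\supp{}{x}=\emptyset$), transfer disjointness to $\dom{\pi}$ via the union inclusion, and conclude from the definition of support (your explicit appeal to Lemma~\ref{lem:dom-supp-empty}). Your extra sentence justifying that $\supp{}{x}$ itself strongly supports $x$ is a welcome clarification the paper leaves implicit, but it does not change the argument.
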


\begin{proof}
    For each $i=1,\ldots,n$ the following hold:
    $$\gamma_i\fix{\tt sem} x \iff \gamma_i\cdot x=x \iff \gamma_i\in\Fix{\supp{}{x}}. $$
    The first equivalence comes from Definition~\ref{def:fix_sem} and the second from the fact that $\nom{X}$ is a strong nominal set. Then by the inclusion $\dom{\pi}\subseteq\bigcup_{1\leq i \leq n} \dom{\gamma_i}$, we deduce that $\pi\in\Fix{\supp{}{x}}$ which implies, by definition,  that $\pi\fix{\tt sem} x$.\qed
\end{proof}

Note that Theorem~\ref{thm:strong-theorem} is not true in the class of nominal sets. For instance, take $\nom{X}$ as $\pow{\tt fin}{\A}$, $x = \{a,b\}$, $\pi = (a \ c)$, 
$\gamma_1 = (a\ b)$, $\gamma_2 = (c \ d)$.
Then, $\gamma_1\fix{\tt sem} x,\gamma_2\fix{\tt sem} x$, and $\dom{\pi} \subseteq\dom{\gamma_1}\cup\dom{\gamma_2}$,  but $\pi\nfix{\tt sem} x$.

\subsection{Strong Nominal \texorpdfstring{$\Sigma$}{Sigma}-algebras}

We can now define a semantics for nominal algebra with fixed-point constraints in (strong) nominal sets.

\begin{definition}[(Strong) $\Sigma$-algebra]\label{def:sigma-algebra-with-nominal-set}
    Given a signature $\Sigma$, a {\em (strong) nominal $\Sigma$-algebra} $\nalg{A}$ consists of:
    \begin{enumerate}
        \item A (strong) nominal set $\nom{A} = (|\nom{A}|,\act)$ - the domain.

        \item An equivariant map $\atom^{\nalg{A}}:\A\to |\nom{A}|$ to interpret atoms; we write the interpretation $\atom(a)$ as $a^{\nalg{A}}\in |\nom{A}|$.
        
        \item An equivariant map $\abs^{\nalg{A}}:\A\times |\nom{A}|\to |\nom{A}|$ such that $\{c\in \A\mid (a \ c)\nfix{\tt sem} \abs^{\nalg{A}}(a,x)\}$ is finite, for all $a\in \A$ and $x\in|\nom{A}|$ (by Lemma~\ref{thm:support-char} this is equivalent to saying that $a\notin \supp{}{\abs^{\nalg{A}}(a,x)}$); we use this to interpret abstraction.

        \item An equivariant map $f^{\nalg{A}}:|\nom{A}|^n \to |\nom{A}|$, for each $\tf{f}:n$ in $\Sigma$ to interpret term-formers.
    \end{enumerate}
    We use $\nalg{A,B,C}$ to denote (strong) $\Sigma$-algebras.
\end{definition}

\begin{remark}
    The standard definition of the condition in item 3 via freshness constraints is $a\fresh_{\tt sem} \abs^{\nalg{A}}(a,x)$ (see, for instance, \cite{DBLP:journals/logcom/GabbayM09}). Here, we use the following equivalence (valid in nominal sets):
    \[\begin{aligned}
        a\fresh_{\tt sem} \abs^{\nalg{A}}(a,x) &\Leftrightarrow a\notin \supp{}{\abs^{\nalg{A}}(a,x)} & (\text{Definition 4.2,~\cite{DBLP:journals/logcom/GabbayM09}})\\
        &\Leftrightarrow\{c\in \A\mid (a \ c)\not\mathrel{\curlywedge_{\tt sem}} \abs^{\nalg{A}}(a,x)\} \text{ is finite} &(\text{Consequence of Lemma~\ref{thm:support-char}})
        \end{aligned}
    \]
\end{remark}

A {\em valuation} $\varsigma$ in a (strong) $\Sigma$-algebra $\nalg{A}$ maps unknowns $X\in\V$ to elements
$\varsigma(X) \in |\nom{A}|$. Below we define an equivariant function $\Int{\act}{\nalg{A}}{\varsigma}: T(\Sigma,\A,\V)\to |\nom{A}|$ to interpret nominal terms w.r.t. a valuation $\varsigma$.

\begin{definition}[Interpretation]\label{def:interpretation-of-terms}
    Let $\nalg{A}$ be a (strong) $\Sigma$-algebra. Suppose that $t \in T(\Sigma, \A, \V)$ and consider a valuation $\varsigma$ in $\nalg{A}$. The {\em interpretation} $\Int{t}{\nalg{A}}{\varsigma}$, or just $\Int{t}{}{\varsigma}$, if $\nalg{A}$ is understood, is defined inductively by:

        \begin{tabular}{l@{ \hspace{1cm} }l}
            $\Int{a}{\nalg{A}}{\varsigma} = a^{\nalg{A}}$ & $\Int{ \pi\act X}{\nalg{A}}{\varsigma} = \pi\act \varsigma(X)$ \\
            $\Int{{\tt f}(t_1,\ldots,t_n)}{\nalg{A}}{\varsigma} = f^{\nalg{A}}(\Int{t_1}{\nalg{A}}{\varsigma},\ldots, \Int{t_n}{\nalg{A}}{\varsigma})$ &  $\Int{[a]t}{\nalg{A}}{\varsigma} = \abs(a^{\nalg{A}},\Int{t}{\nalg{A}}{\varsigma})$.
        \end{tabular}
\end{definition}

\begin{lemma}\label{lem:equivariant-interpretation}
    The term-interpretation map $\Int{\act}{\nalg{A}}{\varsigma}$ is equivariant, that is, $\pi\act \Int{t}{\nalg{A}}{\varsigma} = \Int{\pi\act t}{\nalg{A}}{\varsigma}$ for any permutation $\pi$ and term $t$.
\end{lemma}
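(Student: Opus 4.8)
The statement is the equivariance of the term-interpretation map, $\pi\cdot\Int{t}{\nalg{A}}{\varsigma} = \Int{\pi\cdot t}{\nalg{A}}{\varsigma}$, which I would prove by a straightforward structural induction on the nominal term $t$, using in each case that the corresponding component of the $\Sigma$-algebra ($\atom^{\nalg{A}}$, $\abs^{\nalg{A}}$, $f^{\nalg{A}}$) is equivariant and that the permutation action on the nominal set $\nom{A}$ is a group action.

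First I would set up the induction on the four term constructors. For an atom $a$: $\pi\cdot\Int{a}{\nalg{A}}{\varsigma} = \pi\cdot a^{\nalg{A}} = \pi\cdot\atom^{\nalg{A}}(a) = \atom^{\nalg{A}}(\pi(a)) = (\pi\cdot a)^{\nalg{A}} = \Int{\pi\cdot a}{\nalg{A}}{\varsigma}$, where the middle step is exactly equivariance of $\atom^{\nalg{A}}$ together with the fact that $\pi\cdot a = \pi(a)$ for the permutation action on atoms. For a suspension $\pi'\cdot X$: $\pi\cdot\Int{\pi'\cdot X}{\nalg{A}}{\varsigma} = \pi\cdot(\pi'\cdot\varsigma(X)) = (\pi\circ\pi')\cdot\varsigma(X) = \Int{(\pi\circ\pi')\cdot X}{\nalg{A}}{\varsigma} = \Int{\pi\cdot(\pi'\cdot X)}{\nalg{A}}{\varsigma}$, using the group-action law $\pi\cdot(\pi'\cdot x)=(\pi\circ\pi')\cdot x$ in $\nom{A}$ and the definition $\pi\cdot(\pi'\cdot X) = (\pi\circ\pi')\cdot X$ of the permutation action on terms. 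For a function application $\tf{f}(t_1,\dots,t_n)$: apply $\pi\cdot(-)$ to $f^{\nalg{A}}(\Int{t_1}{}{\varsigma},\dots,\Int{t_n}{}{\varsigma})$, push it inside using equivariance of $f^{\nalg{A}}$, then apply the induction hypothesis componentwise to get $f^{\nalg{A}}(\Int{\pi\cdot t_1}{}{\varsigma},\dots,\Int{\pi\cdot t_n}{}{\varsigma}) = \Int{\tf{f}(\pi\cdot t_1,\dots,\pi\cdot t_n)}{}{\varsigma} = \Int{\pi\cdot\tf{f}(t_1,\dots,t_n)}{}{\varsigma}$. For an abstraction $[a]t$: $\pi\cdot\Int{[a]t}{\nalg{A}}{\varsigma} = \pi\cdot\abs^{\nalg{A}}(a^{\nalg{A}},\Int{t}{}{\varsigma}) = \abs^{\nalg{A}}(\pi\cdot a^{\nalg{A}},\pi\cdot\Int{t}{}{\varsigma})$ by equivariance of $\abs^{\nalg{A}}$ (noting $\abs^{\nalg{A}}:\A\times|\nom{A}|\to|\nom{A}|$ is equivariant for the product action), which by the atom case and the induction hypothesis equals $\abs^{\nalg{A}}((\pi(a))^{\nalg{A}},\Int{\pi\cdot t}{}{\varsigma}) = \Int{[\pi(a)](\pi\cdot t)}{}{\varsigma} = \Int{\pi\cdot([a]t)}{}{\varsigma}$.

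I do not expect any serious obstacle here; the proof is essentially bookkeeping. The only point requiring a little care is the abstraction case, where one must be precise about what "equivariant" means for the binary map $\abs^{\nalg{A}}:\A\times|\nom{A}|\to|\nom{A}|$ — namely equivariance with respect to the diagonal/product permutation action on $\A\times|\nom{A}|$, so that $\pi\cdot\abs^{\nalg{A}}(a^{\nalg{A}},x) = \abs^{\nalg{A}}(\pi(a)\text{-image},\pi\cdot x)$ — and to match this with the definition $\pi\cdot([a]t) = [\pi(a)](\pi\cdot t)$ of the permutation action on abstraction terms. Everything else follows directly from Definition~\ref{def:interpretation-of-terms} and the group-action axioms, so the induction goes through cleanly.
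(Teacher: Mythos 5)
Your proof is correct and follows essentially the same route as the paper's own argument (given in the appendix): a structural induction over the four term constructors, using equivariance of $\atom^{\nalg{A}}$, $f^{\nalg{A}}$ and $\abs^{\nalg{A}}$ together with the group-action laws. Your remark about the product action on $\A\times|\nom{A}|$ in the abstraction case is a reasonable point of care but introduces nothing beyond what the paper does.
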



\begin{proof}
    The proof is by induction on the structure of the term $t$.\qed
\end{proof}

        
        



\begin{definition}[Context and Judgement Validity]
    For any (strong) $\Sigma$-algebra $\nalg{A}$:
        \begin{itemize}
           \item  $\Int{\Upsilon}{\nalg{A}}{\varsigma}$ is {\it valid} iff   $\pi\fix{\tt sem} \Int{X}{\nalg{A}}{\varsigma}$, for each $\pi\fix{} X\in\Upsilon$;
           \item  $\Int{ \Upsilon \vdash \pi\fix{} t }{\nalg{A}}{\varsigma}$ is {\it valid}  iff   $\Int{ \Upsilon}{\nalg{A}}{\varsigma}$ (valid) implies $\pi\fix{\tt sem} \Int{t}{\nalg{A}}{\varsigma}$;
          \item   $\Int{ \Upsilon \vdash t = u}{\nalg{A}}{\varsigma}$  is {\it valid}  iff $\Int{ \Upsilon}{\nalg{A}}{\varsigma}$ (valid) implies $\Int{t}{\nalg{A}}{\varsigma} = \Int{u}{\nalg{A}}{\varsigma}$.
        \end{itemize}
More generally, $\Int{ \Upsilon \vdash \pi\fix{} t }{\nalg{A}}{}$  is {\it valid} iff $\Int{ \Upsilon \vdash \pi\fix{} t }{\nalg{A}}{\varsigma}$ is valid for all valuations $\varsigma$. Similarly for  $\Int{ \Upsilon \vdash t = u}{\nalg{A}}{}$.
\end{definition}


Then a (strong) model of a theory is an interpretation that validates its axioms as follows:

\begin{definition}[(Strong) Model]
    Let  $\T = (\Sigma,Ax)$ be a theory. A {\em (strong) model of} $\T$ is a (strong) $\Sigma$-algebra ${\cal A}$ such that
    \begin{align*}
            \Int{\Upsilon \vdash t = u}{\nalg{A}}{} \text{ is valid for every axiom } \Upsilon \vdash t = u \text{ in $Ax$ and every valuation $\varsigma$.}
    \end{align*}
\end{definition}



We are now ready to define validity with respect to a theory.

\begin{definition}
    For any theory $\T$, define {\em (strong) validity with respect to} $\T$ for judgement forms as follows:
    \begin{itemize}
        \item Write $\Upsilon \vDash_\T \pi\fix{} t$ (resp. $\Upsilon \vDash_\T^s \pi\fix{} t$) iff $\Int{ \Upsilon \vdash \pi\fix{} t}{\nalg{A}}{}$ is valid for all models $\nalg{A}$ of $\T$ (resp.\ all strong models of $\T$).
        \item Write $\Upsilon \vDash_\T t=u$ (resp. $\Upsilon \vDash_\T^s t=u$) iff $\Int{ \Upsilon \vdash t = u}{\nalg{A}}{}$  is valid for all models $\nalg{A}$ of $\T$ (resp.\ all strong models of $\T$).
    \end{itemize}
\end{definition}

\begin{lemma}\label{lem:singleton-interpretation}
    Define the {\em singleton interpretation} $\nalg{S}$ to have as domain the strong nominal set $\{\star\}$, with $\atom^{\nalg{S}},\abs^{\nalg{S}},f^{\nalg{S}}$ constant functions with value $\star$. Then $\nalg{S}$ is a strong $\Sigma$-algebra and a model for every theory.
\end{lemma}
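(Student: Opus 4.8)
The plan is to verify the three requirements spelled out in Definitions~\ref{def:sigma-algebra-with-nominal-set} and the subsequent ``(Strong) Model'' definition for the candidate $\nalg{S}$: that $\{\star\}$ is a strong nominal set, that the constant maps $\atom^{\nalg{S}},\abs^{\nalg{S}},f^{\nalg{S}}$ satisfy the conditions of a strong $\Sigma$-algebra, and that $\nalg{S}$ validates every axiom of every theory.

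First I would recall from Example~\ref{ex:nominal-and-strong-sets}(\ref{ex:singleton-set}) that $\{\star\}$ with the trivial action $\pi\act\star = \star$ is a strong nominal set with $\supp{}{\star} = \emptyset$; indeed $\pi\act\star = \star$ for \emph{all} $\pi$, and $\Fix{\emptyset} = \Perm{\A}$, so the biconditional $\pi\in\Fix{\emptyset} \iff \pi\act\star=\star$ holds trivially. Next I would check the algebra conditions. The maps $\atom^{\nalg{S}}:\A\to\{\star\}$, $\abs^{\nalg{S}}:\A\times\{\star\}\to\{\star\}$ and $f^{\nalg{S}}:\{\star\}^n\to\{\star\}$ are forced to be the unique constant maps with value $\star$, and any constant map into a $\Perm{\A}$-set is equivariant (as noted in the excerpt), since $\pi\act f(x) = \pi\act\star = \star = f(\pi\act x)$. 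For item~3, I must check that $\{c\in\A\mid (a\ c)\nfix{\tt sem}\abs^{\nalg{S}}(a,\star)\}$ is finite: but $\abs^{\nalg{S}}(a,\star) = \star$ and $(a\ c)\act\star = \star$ for every $c$, so this set is empty, hence finite. Thus $\nalg{S}$ is a strong $\Sigma$-algebra.

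Finally, for the model condition, fix any theory $\T = (\Sigma,Ax)$, any axiom $\Upsilon\vdash t=u$ in $Ax$, and any valuation $\varsigma$ (necessarily $\varsigma(X) = \star$ for all $X$). By an easy induction on term structure using Definition~\ref{def:interpretation-of-terms}, $\Int{t}{\nalg{S}}{\varsigma} = \star$ for every term $t$ (the base cases give $a^{\nalg{S}} = \star$ and $\pi\act\varsigma(X) = \star$, and the inductive cases apply the constant maps). Hence $\Int{t}{\nalg{S}}{\varsigma} = \star = \Int{u}{\nalg{S}}{\varsigma}$, so $\Int{\Upsilon\vdash t=u}{\nalg{S}}{\varsigma}$ is valid regardless of whether $\Int{\Upsilon}{\nalg{S}}{\varsigma}$ is valid. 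Since this holds for every valuation and every axiom, $\nalg{S}$ is a strong model of $\T$, and as $\T$ was arbitrary it is a model of every theory.

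I do not anticipate a genuine obstacle here: the lemma is essentially a sanity check that the degenerate one-point algebra behaves as expected, and every step is a one-line verification. The only mild care needed is to make sure the induction establishing $\Int{t}{\nalg{S}}{\varsigma} = \star$ covers all four term constructors, and to note explicitly that validity of an equality judgement is an implication whose conclusion holds unconditionally, so the hypothesis on $\Upsilon$ is irrelevant.
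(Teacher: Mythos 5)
Your proof is correct and follows essentially the same route as the paper's: the paper's own proof just notes that every valuation sends unknowns to $\star$, so $\Int{t}{\nalg{S}}{\varsigma} = \Int{u}{\nalg{S}}{\varsigma} = \star$ for every axiom, while you additionally spell out the (trivial) verifications that $\{\star\}$ is a strong nominal set and that the constant maps satisfy the $\Sigma$-algebra conditions. Nothing is missing; your version is merely more explicit.
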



\begin{proof}
    To verify that $\nalg{S}$ is a model of $\T$, check that $\Int{\Upsilon\vdash t=u}{\nalg{S}}{\varsigma}$ is valid for each axiom $\Upsilon\vdash t=u\in Ax$ and valuation $\varsigma$ (which maps all unknowns to $\star$). If $\Int{\Upsilon}{\nalg{S}}{\varsigma}$ is valid, then $\Int{t}{\nalg{S}}{\varsigma} = \Int{u}{\nalg{S}}{\varsigma} = \star$, confirming $\nalg{S}$ as a model.\qed
\end{proof}

\begin{remark}
    Lemma \ref{lem:singleton-interpretation}  shows that we do not need to require the map $\atom$ to be injective.
\end{remark}

\subsection{Counter-Example: Soundness Failure}\label{sec:counter_exam}

Soundness of a derivation system means that derivability implies validity. More precisely, in a sound system  the following should hold:

\noindent{\bf (Soundness)}   Suppose $\T=(\Sigma,Ax)$ is a theory. 
    \begin{enumerate}
        \item If $\Upsilon \vdash \pi\fix{} t$ then $\Upsilon \vDash_\T \pi\fix{} t$.

        \item  If $\Upsilon \vdash_\T t=u$ then $\Upsilon \vDash_\T t = u$.
\end{enumerate}

Note that by definition, for a judgement to be valid it has to hold in all models $\nalg{A}$ of $\T$ for every valuation $\varsigma$. We present two counter-examples that illustrate two different points where soundness fails.

\subsubsection{Rule $\fix{}{\bf var}$.} 
The problem arises from the rule $(\fix{} {\bf var})$:
         \begin{prooftree}
            \AxiomC{$\dom{\pi^{\pi'^{-1}}}\subseteq \dom{\perm{\Upsilon|_X}}$}
            \RightLabel{$(\fix{} \textbf{var})$}
            \UnaryInfC{$\Upsilon \vdash \pi\fix{} \pi'\act X$}
        \end{prooftree}

\begin{claim} There exist a $\Sigma$-algebra $\nalg{A}$, a valuation $\varsigma$ and a derivation using rule $(\fix{} {\bf var})$ such that 
$\Upsilon \vdash \pi\fix{}\pi'\act X$ but  $\Int{ \Upsilon \vdash \pi\fix{}\pi'\act X}{\nalg{A}}{\varsigma}$ is not valid.
\end{claim}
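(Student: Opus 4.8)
The plan is to exhibit a concrete counter-example built on a non-strong nominal set, since the rule $(\fix{}\textbf{var})$ essentially internalises the conclusion of Theorem~\ref{thm:strong-theorem}, which we have already observed fails for general nominal sets. The natural candidate is the nominal set $\pow{\tt fin}{\A}$ from Example~\ref{ex:nominal-and-strong-sets}(\ref{ex:pow-fin-set}), using the same atoms as in the remark following Theorem~\ref{thm:strong-theorem}. Concretely, I would take $\Sigma$ to be any signature (say the empty one, so $\nalg{A}$ only needs a domain and the maps $\atom,\abs$, which can be chosen trivially or irrelevantly since the judgement only mentions a single variable $X$), let the domain be $\pow{\tt fin}{\A}$, and set $\varsigma(X) = \{a,b\}$.

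Next I would choose the permutations so that the side-condition of $(\fix{}\textbf{var})$ is met while semantic validity fails. Take $\pi' = \id$ and $\pi = (a\ c)$, and let the context be $\Upsilon = \{(a\ b)\fix{} X,\ (c\ d)\fix{} X\}$. Then $\perm{\Upsilon|_X} = \{(a\ b),(c\ d)\}$, so $\dom{\perm{\Upsilon|_X}} = \{a,b,c,d\}$, and $\dom{\pi^{\pi'^{-1}}} = \dom{(a\ c)} = \{a,c\}\subseteq\{a,b,c,d\}$; hence $\Upsilon\vdash (a\ c)\fix{} X$ is derivable by a single application of $(\fix{}\textbf{var})$. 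On the semantic side, $\Int{\Upsilon}{\nalg{A}}{\varsigma}$ is valid because $(a\ b)\act\{a,b\} = \{a,b\}$ and $(c\ d)\act\{a,b\} = \{a,b\}$, i.e.\ both primitive constraints of $\Upsilon$ hold of $\varsigma(X)$; but $(a\ c)\act\{a,b\} = \{c,b\}\neq\{a,b\}$, so $(a\ c)\nfix{\tt sem}\Int{X}{\nalg{A}}{\varsigma}$. Therefore $\Int{\Upsilon\vdash (a\ c)\fix{} X}{\nalg{A}}{\varsigma}$ is not valid, establishing the claim.

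I would also need to check the small bookkeeping points: that $\nalg{A}$ as described really is a $\Sigma$-algebra (the equivariance of the action on $\pow{\tt fin}{\A}$ is in Example~\ref{ex:nominal-and-strong-sets}(\ref{ex:pow-fin-set}), and with $\Sigma = \emptyset$ there are no term-formers to interpret; if one prefers a nonempty $\Sigma$ the term-former maps can be taken constant), and that the derivation genuinely uses $(\fix{}\textbf{var})$ — which it does, being a one-step derivation by exactly that rule. I do not expect any real obstacle here; the only mild subtlety is making sure the chosen $\Upsilon$ is itself a legal fixed-point context and that the instance of $(\fix{}\textbf{var})$ is stated with $\pi' = \id$ so that $\pi^{\pi'^{-1}} = \pi$, keeping the side-condition transparent. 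The conceptual content is simply that $(\fix{}\textbf{var})$ soundly reflects "closure of fixed-points under composition/domain-union," which is a property of strong supports only, so any non-strong witness — and $\{a,b\}\in\pow{\tt fin}{\A}$ is the canonical one — refutes soundness over the full class of nominal sets.
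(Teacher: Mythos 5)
Your proposal is correct and is essentially the paper's own argument: the paper also uses $\pow{\tt fin}{\A}$ as the carrier, the valuation $\varsigma(X_1)=\{a_1,a_2\}$, the context $\{(a_1\ a_2)\fix{}X_1,(a_3\ a_4)\fix{}X_1\}$ and $\pi=(a_1\ a_3)$ with $\pi'=\id$, differing only in that it spells out concrete equivariant choices for $\atom^{\nalg{A}}$, $\abs^{\nalg{A}}$ and the term-former maps (namely $\{a\}$, $B\setminus\{a\}$ and intersection) rather than taking $\Sigma$ empty or constant. The bookkeeping you defer is genuinely routine (a constant map with value $\emptyset$ is equivariant and satisfies the abstraction side-condition), so there is no gap.
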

First of all, fix an enumeration of $\V = \{X_1,X_2,\ldots\}$ and $\A = \{a_1,a_2,\ldots\}$. Now, let's give the nominal set $\pow{\tt fin}{\A}$ a $\Sigma$-algebra structure. 

\begin{enumerate}
            \item Consider the domain of $\nalg{A}$ as the nominal set $\nom{A} = (\pow{\tt fin}{\A},\act) 
    $ presented in Example \ref{ex:nominal-and-strong-sets}(\ref{ex:pow-fin-set}).
            
            \item Define $\atom^\nalg{A}\colon \A\to \pow{\tt fin}{\A}$ by $\atom^\nalg{A}(a) = \{a\}$.

            \item Define $\abs^\nalg{A}:\A\times \pow{\tt fin}{\A}\to \pow{\tt fin}{\A}$ by $\abs^\nalg{A}(a,B) = B\setminus\{a\}$.

            \item For each term-former ${\tt f}\colon n\in\Sigma$, we associate a map $f^{\nalg{A}}\colon \pow{\tt fin}{\A}^n \to \pow{\tt fin}{\A}$ defined by $f^{\nalg{A}}(B_1,\ldots,B_n) = \bigcap_{i=1}^n B_i.$
        \end{enumerate}
It is easy to see that the functions $\atom^\nalg{A},\abs^\nalg{A}, f^{\nalg{A}}$ are equivariant.

Define the valuation $\varsigma: \V \to \pow{\tt fin}{\A}$ by $\varsigma(X_i) = \{a_{i},a_{i+1}\}$. Now, let's build a derivation using the rule $(\fix{} {\bf var})$ and interpret it in $\nalg{A}$. Consider $\Upsilon = \{(a_1 \ a_2)\fix{} X_1,(a_3 \ a_4)\fix{} X_1\}$ and write $\gamma_1 = (a_1 \ a_2)$ and $\gamma_2 = (a_3 \ a_4)$. Take $\pi = (a_1 \ a_3)$. Hence \[
    \dom{\pi} = \{a_1,a_3\} \subseteq \{a_1,a_2,a_3,a_4\} = \dom{\gamma_1}\cup\dom{\gamma_2}.
\]

So  $\Upsilon \vdash \pi\fix{} X_1$ is derivable. All we need to do now is to verify the validity of $\Int{\Upsilon\vdash \pi\fix{} X_1}{\nalg{A}}{\varsigma}$. In order to so, we need to prove that
        \begin{center}
            If $\Int{\Upsilon}{\nalg{A}}{\varsigma}$ is valid then $\pi\fix{\tt sem}\Int{X_1}{\nalg{A}}{\varsigma}$.
        \end{center}

        First, observe that $\Int{\Upsilon}{\nalg{A}}{\varsigma}$ is in fact valid:
        \begin{align*}
            \gamma_1\act \Int{X_1}{\nalg{A}}{\varsigma} &= \gamma_1\act \{a_1,a_2\} = \{a_1,a_2\} = \Int{X_1}{\nalg{A}}{\varsigma}\\
            \gamma_2\act \Int{X_1}{\nalg{A}}{\varsigma} &= \gamma_2\act\{a_1,a_2\} = \{a_1,a_2\} = \Int{X_1}{\nalg{A}}{\varsigma}.
        \end{align*}

        However, 
        \begin{align*}
            \pi\act\Int{X_1}{\nalg{A}}{\varsigma} &= \pi\act\varsigma(X_1) \\
            &= \pi\act\{a_1,a_2\} \\
            &= \{a_3,a_2\} \neq \{a_1,a_2\} = \varsigma(X_1) = \Int{X_1}{\nalg{A}}{\varsigma}.
        \end{align*}

\subsubsection{Non-strong Axioms.}\label{ex:commutativity-not-strong}
Let $\tf{C}$ be the commutativity axiom: $\tf{C}=\{ \vdash X+Y = Y+X\}$. Consider the $\Sigma$-algebra with domain $\lin{\F}(\C,\Sigma)$, that is the set of ground nominal terms quotiented by $\tf{C}$ and $\alpha$-equality. Let $[t]_\C$ denote the equivalence class of $t\in \F(\Sigma)$ modulo $\alpha,\C$. 

Consider the same permutations $\gamma_1,  \gamma_2, \pi$ as above, so again  we can derive $\Upsilon \vdash \pi \fix{} X_1$.

Take $\varsigma(X_1) = [a_1 + a_2]_\C$.
Then $\Int{\Upsilon}{\nalg{A}}{\varsigma}$ is  valid:
        \begin{align*}
            \gamma_1\act \Int{X_1}{\nalg{A}}{\varsigma} = [a_2+a_1]_\C =  \Int{X_1}{\nalg{A}}{\varsigma}\qquad 
            \gamma_2\act\Int{X_1}{\nalg{A}}{\varsigma} =  [a_2+a_1]_\C = \Int{X_1}{\nalg{A}}{\varsigma}.
        \end{align*}
        
However, $\pi \nfix{\tt sem} \Int{X_1}{\nalg{A}}{\varsigma}$ since $\pi\cdot \Int{X_1}{\nalg{A}}{\varsigma}=[a_3+a_2]_\C \not = [a_1+a_2]_\C =  \Int{X_1}{\nalg{A}}{\varsigma}$.

\begin{remark}
   $\lin{\F}(\C,\Sigma)$ is not a strong nominal set: note that  $(a \ b) \fix{\tt sem} [a + b]_\C$, since $(a \ b) \cdot  [a +b]_\C = [a + b]_\C$, but $(a \ b) \nfix{\tt sem} a$, i.e., $(a \ b)\notin \Fix{\supp{}{[a+b]_\C}}=\Fix{\{a,b\}}$. 
\end{remark}

\section{Recovering Soundness: Strong Models}\label{sec:recovering1}

To avoid the problems discussed in the previous section, we now focus on strong models. With these models, it is possible to ensure soundness.

\begin{theorem}[Soundness for strong $\Sigma$-algebras]\label{thm:soundness_fix}
     Suppose $\T=(\Sigma,Ax)$ is a theory. Then the following hold:
    \begin{enumerate}
        \item If $\Upsilon \vdash_\T \pi\fix{} t$ then $\Upsilon \vDash_\T^s \pi\fix{} t$.

        \item  If $\Upsilon \vdash_\T t=u$ then $\Upsilon \vDash_\T^s t = u$.
\end{enumerate}
\end{theorem}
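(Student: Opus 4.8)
The plan is to prove both statements simultaneously by a single induction on the structure of the derivation, treating the fixed-point judgements and the equality judgements together since the rules $(\fix{}\textbf{abs})$ and $(\textbf{perm})$ (and the side conditions on $(\textbf{fr})$) make the two judgement forms mutually dependent. So I would set up the statement: for every strong model $\nalg{A}$ of $\T$ and every valuation $\varsigma$ such that $\Int{\Upsilon}{\nalg{A}}{\varsigma}$ is valid, if $\Upsilon\vdash_\T\pi\fix{}t$ then $\pi\fix{\tt sem}\Int{t}{\nalg{A}}{\varsigma}$, and if $\Upsilon\vdash_\T t=u$ then $\Int{t}{\nalg{A}}{\varsigma}=\Int{u}{\nalg{A}}{\varsigma}$. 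Then proceed case by case on the last rule applied.

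For the fixed-point rules: $(\fix{}\textbf{a})$ is immediate from $\Int{a}{}{\varsigma}=a^{\nalg{A}}$ and equivariance of $\atom^{\nalg{A}}$; $(\fix{}{\tt f})$ follows from equivariance of $f^{\nalg{A}}$ applied componentwise to the induction hypotheses; $(\fix{}\textbf{abs})$ uses equivariance of $\abs^{\nalg{A}}$ together with the defining property of $\abs^{\nalg{A}}$ (that $a\notin\supp{}{\abs^{\nalg{A}}(a,x)}$, i.e.\ Lemma~\ref{thm:support-char}) to discharge the freshness of $c_1$ — here the auxiliary constraints $\lin{(c_1\ c_2)\fix{}\var t}$ in the premise are used to extend $\varsigma$ on the fresh names and keep $\Int{\Upsilon,\dots}{}{\varsigma}$ valid. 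The crucial case is $(\fix{}\textbf{var})$: from the side condition $\dom{\pi^{\pi'^{-1}}}\subseteq\dom{\perm{\Upsilon|_X}}$ and validity of $\Int{\Upsilon}{}{\varsigma}$ we know $\gamma\fix{\tt sem}\varsigma(X)$, i.e.\ $\gamma\cdot\varsigma(X)=\varsigma(X)$, for each $\gamma\in\perm{\Upsilon|_X}$; by \textbf{strongness} of $\nom{A}$ and Theorem~\ref{thm:strong-theorem} this gives $\pi^{\pi'^{-1}}\fix{\tt sem}\varsigma(X)$, and then conjugating by $\pi'$ yields $\pi\cdot(\pi'\cdot\varsigma(X))=\pi'\cdot\varsigma(X)$, i.e.\ $\pi\fix{\tt sem}\Int{\pi'\act X}{}{\varsigma}$. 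This is exactly the step that fails in the non-strong counterexample of Section~\ref{sec:counter_exam}, so it is the heart of the argument.

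For the equality rules: $(\textbf{refl})$, $(\textbf{symm})$, $(\textbf{tran})$ are trivial since $=$ is interpreted as genuine equality in $|\nom{A}|$; $(\textbf{cong}[])$ and $(\textbf{cong}{\tt f})$ follow from the induction hypothesis and the definition of the interpretation on abstractions/term-formers; $(\textbf{ax}_{\Upsilon'\vdash t=u})$ uses that $\nalg{A}$ is a \emph{model} of $\T$, so $\Int{\Upsilon'\vdash t=u}{}{}$ is valid for all valuations — one instantiates at the valuation $\varsigma'$ defined by $\varsigma'(X)=\Int{X\sigma}{}{\varsigma}$ (up to the permutation $\pi$, using Lemma~\ref{lem:equivariant-interpretation} to commute $\pi\act$ with the interpretation), checking that $\Int{(\pi\act\Upsilon')\sigma}{}{\varsigma}$ valid forces $\Int{\Upsilon'}{}{\varsigma'}$ valid. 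The $(\textbf{fr})$ rule requires a small lemma: if $\dom\pi\subseteq\dom{\perm{\Upsilon|_X}}$ then for any $\varsigma$ making $\Int{\Upsilon}{}{\varsigma}$ valid we can extend to a valuation making $\Int{\Upsilon,\pi\fix{}X}{}{\varsigma}$ valid — but in a strong model, by Theorem~\ref{thm:strong-theorem}, validity of $\Int{\Upsilon}{}{\varsigma}$ \emph{already} implies $\pi\fix{\tt sem}\varsigma(X)$, so no extension is needed and the IH applies directly to the same $\varsigma$. Finally $(\textbf{perm})$: from the two premises the IH gives $(a\ c_1)\fix{\tt sem}\Int{t}{}{\varsigma}$ and $(b\ d_1)\fix{\tt sem}\Int{t}{}{\varsigma}$ for fresh $c_1,d_1$; by Lemma~\ref{thm:support-char} this means $a,b\notin\supp{}{\Int{t}{}{\varsigma}}$, hence $(a\ b)\in\Fix{\supp{}{\Int{t}{}{\varsigma}}}$ and so $(a\ b)\cdot\Int{t}{}{\varsigma}=\Int{t}{}{\varsigma}$, i.e.\ $\Int{(a\ b)\act t}{}{\varsigma}=\Int{t}{}{\varsigma}$ by equivariance.

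I expect the main obstacle to be the bookkeeping around fresh names in the $(\fix{}\textbf{abs})$ and $(\textbf{perm})$ cases: one must be careful that the fresh atoms $c_1,c_2,d_1,d_2$ introduced in those rules are genuinely new (not in $\supp{}{\varsigma(X)}$ for any relevant $X$, nor occurring in $\Upsilon$ or $t$), so that the auxiliary primitive constraints $\lin{(c_1\ c_2)\fix{}\var t}$ can be satisfied by suitably (re)defining the valuation without disturbing the validity of $\Int{\Upsilon}{}{\varsigma}$, and then argue that the resulting conclusion does not depend on the choice of fresh names. This equivariance-under-renaming argument, together with correctly invoking Theorem~\ref{thm:strong-theorem} at the $(\fix{}\textbf{var})$ and $(\textbf{fr})$ cases where strongness is essential, is where the real work lies; everything else is a routine structural unwinding of the interpretation function.
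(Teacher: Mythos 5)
Your proposal is correct and follows essentially the same route as the paper's proof: induction on the derivation, invoking Theorem~\ref{thm:strong-theorem} (strongness) at the $(\fix{}\textbf{var})$, $(\textbf{fr})$ and $(\textbf{perm})$ cases, and re-indexing the valuation via $\varsigma'(X)=\Int{X\sigma}{\nalg{A}}{\varsigma}$ in the axiom case. The only slight imprecision is in the $(\textbf{perm})$ case, where the step from $(a\ c_1)\fix{\tt sem}\Int{t}{\nalg{A}}{\varsigma}$ to $a\notin\supp{}{\Int{t}{\nalg{A}}{\varsigma}}$ is justified by strong support (a single fresh $c_1$ suffices because $(a\ c_1)\in\Fix{\supp{}{\Int{t}{\nalg{A}}{\varsigma}}}$ forces $a$ out of the support) rather than by the cofiniteness criterion of Lemma~\ref{thm:support-char}.
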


\begin{proof}
    The proof is done by induction on the derivation of $\Upsilon \vdash \pi\fix{}t$, focusing on the last rule applied. Here, we demonstrate the most interesting case, where the last rule is ($\fix{}${\bf var}):
        \begin{prooftree}
            \AxiomC{$\dom{\pi^{\pi'^{-1}}}\subseteq \dom{\perm{\Upsilon|_X}}$}
            \RightLabel{$(\fix{} \textbf{var})$}
            \UnaryInfC{$\Upsilon \vdash \pi\fix{} \pi'\act X$}
        \end{prooftree}
        
        Suppose  that $\Int{\Upsilon}{\nalg{A}}{\varsigma}$ is valid. We need to show that $\pi\act \Int{\pi'\act X}{\nalg{A}}{\varsigma} = \Int{\pi'\act X}{\nalg{A}}{\varsigma}$.

        If $\perm{\Upsilon|_X} = \{\gamma_1,\ldots,\gamma_n\}$, then from the inclusion $\dom{\pi^{\pi'^{-1}}}\subseteq \dom{\perm{\Upsilon|_X}}$ we get
        \(
       \dom{\pi^{\pi'^{-1}}}\subseteq \bigcup_{i=1}^n \dom{\gamma_i}.
        \)

        Moreover, the validity of $\Int{\Upsilon}{\nalg{A}}{\varsigma}$ implies that $\gamma_i\fix{\tt sem} \Int{X}{\nalg{A}}{\varsigma}$  for every $\gamma_i\fix{} X\in \Upsilon$. Thus, $\pi^{\pi'^{-1}}\act \Int{X}{\nalg{A}}{\varsigma} = \Int{X}{\nalg{A}}{\varsigma}$ by Theorem \ref{thm:strong-theorem}. The result follows by noticing that $ \pi^{\pi'^{-1}}\act\Int{X}{\nalg{A}}{\varsigma} = \Int{X}{\nalg{A}}{\varsigma} \iff \pi\act \Int{\pi'\act X}{\nalg{A}}{\varsigma} = \Int{\pi'\act X}{\nalg{A}}{\varsigma}$.
\end{proof}

\subsection{Strong Axioms}

Example~\ref{ex:commutativity-not-strong} demonstrates that there exist theories $\T$ for which $\lin{\F}(\T,\Sigma)$ is not a strong nominal set. This observation rises the following question: are there theories $\T$ for which $\lin{\F}(\T,\Sigma)$ forms a strong nominal set? We propose to characterise such theories by defining a notion of {\em strong axiom}, and we will refer to them as {\em strong theories}.

In the definition below, let $<_{lex}$ be the lexicographic ordering of the positions in a term: e.g.,   $1.2 <_{lex}2.2$, $1.1<_{lex}1.2$.  

Given a term $t$, we can order the occurrences of variables in $t$ according to their position: we write $X<_t Y$ if $X$ occurs in $t$ at a position $p$ and $Y$ occurs in $t$ at a position $q$ such that $p <_{lex} q$.

We say that a term  $t$ is well-ordered if $<_t$ is a strict partial order, i.e., there is no $X,Y\in \var{t}$ such that $X<_t Y$ and $Y<_t X$. We will only consider axioms with well-ordered first-order terms.
\begin{definition}\label{def:strong_ax}
An axiom $\Upsilon\vdash t=u$ is {\em strong} if the following hold:

\begin{enumerate} 
\item \label{def:strong_ax-item1} $t$ and $u$ are first-order terms (i.e., they are built using function symbols and variables), and $\Upsilon=\emptyset$;
\item \label{def:strong_ax-item2} $u$ and $t$ are well-ordered;
\item \label{def:strong_ax-item3}  the order of the variables that occur in $t$ and in $u$ is compatible: i.e., if  $X <_t Y$ then it is not the case that $Y <_u X$. 
\end{enumerate}
Note that it could be that the variables $X$ and $Y$ occur in $t$ and not in $u$, the important fact is that there are no disagreements in the ordering in $t$ and $u$. 
\end{definition}

Condition~(\ref{def:strong_ax-item1}) excludes axioms such as $Ax=\{~\vdash a=b\}$, $Ax=\{~\vdash a+b=b+a\}$ and $Ax = \{~\vdash \tf{f}([a]X,[b]Y) = \tf{g}([b]X,[a]Y)\}$. Condition~(\ref{def:strong_ax-item2}) excludes axioms such as distributivity $\tf{D}=\{~\vdash X*(Y + Z)= X* Y + X * Z\}$.  Permutative theories, where axioms have the form $~\vdash \tf{f}(X_1,\ldots,X_n) = \tf{f}(X_{\rho(1)},\ldots, X_{\rho(n)})$, for some permutation $\rho$ of $\{1,\ldots,n\}$, do not satisfy the definition either, i.e. permutative axioms are not strong. 

\begin{example}[Strong Axioms]\label{ex:strong_axioms} The following axioms (and their combinations) are strong:
Associativity $\tf{A} = \{~\vdash \tf{f}(\tf{f}(X,Y),Z) = \tf{f}(X,\tf{f}(Y,Z))\}$. Homomorphism $\tf{Hom}=\{~\vdash \tf{h}(X+Y)=\tf{h}(X)+\tf{h}(Y)\}$. Idempotency $\tf{I}=\{~\vdash \tf{g}(X,X)=X\}$.
Neutral element $\tf{N}=\{~\vdash X * 0 = 0\}$. Left-/right-projection $\tf{Lproj}=\{~\vdash \tf{pl}(X,Y)=X\}$ and $\tf{Rproj}=\{~\vdash \tf{pr}(X,Y)=Y\}$.
\end{example}

\begin{example}[Theory $\tf{ATOM}$]
Consider the axiom $Ax=\{~\vdash a=b\}$ that generates the theory $\tf{ATOM}$. By our Definition~\ref{def:strong_ax}, this axiom is not strong.
\end{example}

\begin{theorem}
    If $\T$ is a strong theory then $\lin{\F}(\T,\Sigma)$ is a strong nominal set.
\end{theorem}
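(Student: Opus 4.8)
The plan is to show that every equivalence class $\lin{g} \in \lin{\F}(\T,\Sigma)$ is \emph{strongly} supported by a finite set, namely by $\supp{}{\lin{g}} = \bigcap\{\atm{g'} \mid g' \in \lin{g}\}$. Since we already know (Example~\ref{ex:nominal-and-strong-sets}(\ref{ex:ground-algebra})) that $\lin{\F}(\T,\Sigma)$ is a nominal set with this support, it suffices to prove the missing direction: for $B = \supp{}{\lin g}$ and any $\pi \in \Perm{\A}$, if $\pi \act \lin{g} = \lin{g}$ then $\pi \in \Fix{B}$. Equivalently, by contraposition, if $\pi$ moves some atom of $B$ then $\pi \act \lin g \neq \lin g$. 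I would set this up by picking a \emph{canonical representative} of each class: since $t$ and $u$ in every (strong) axiom are first-order and well-ordered with compatible variable orderings, rewriting with $\T \cup \alpha$ never introduces, deletes or reorders the atom occurrences in a way that changes $\atm{\cdot}$ — more precisely, I claim $\atm{g'}$ is the same for \emph{every} $g' \in \lin g$, so that $\supp{}{\lin g} = \atm{g'}$ for any representative $g'$.

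The key lemma is therefore: \emph{if $\T$ is strong and $~\vdash_\T g = h$ for ground terms $g,h$, then $\atm{g} = \atm{h}$} (working up to $\alpha$ we should say $\tf{fn}$ rather than $\atm$, but since axioms contain no abstractions the binding structure is untouched by $\T$-steps, and $\alpha$-steps preserve $\tf{fn}$ by definition). I would prove this by induction on the derivation of $g = h$, using the derivation rules of Figure~\ref{fig:equality-rules1}: the congruence rules and transitivity/symmetry/reflexivity are immediate, and for $(\tf{ax}_{\emptyset \vdash t = u})$ one instantiates the strong axiom $t = u$ by a substitution $\sigma$ and permutation $\pi$; here condition (1) of Definition~\ref{def:strong_ax} (both sides first-order, $\Upsilon = \emptyset$) together with conditions (2)–(3) (well-ordered, compatible orderings) guarantees that $\var{t} = \var{u}$ is \emph{not} required, but that every variable on one side that carries atoms also appears on the other side in a position that preserves the multiset — actually what we need is just the \emph{set} equality $\tf{fn}(\pi \act t\sigma) = \tf{fn}(\pi \act u\sigma)$, which follows because $\var{t} \supseteq \var{u}$ and $\var{u} \supseteq \var{t}$ need not both hold, so one must check that a variable appearing only in $t$ contributes atoms also reachable in $u$ — this is exactly what the well-ordering/compatibility conditions rule in by forbidding axioms like distributivity and projections-that-drop-variables. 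Wait: $\tf{Lproj} = \{\vdash \tf{pl}(X,Y) = X\}$ \emph{is} listed as strong, yet $\tf{fn}(\tf{pl}(g_1,g_2)) \neq \tf{fn}(g_1)$ in general. So the clean lemma "$\atm{g} = \atm{h}$" is false, and I must instead prove the weaker, correct statement that the \emph{intersection} $\bigcap_{g' \sim g}\tf{fn}(g')$ is strongly supporting; the honest route is to show directly that if $\pi \act \lin g = \lin g$ then $\pi$ fixes $\supp{}{\lin g}$ pointwise.

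So the main argument I would actually run is: let $a \in \supp{}{\lin g} = \bigcap_{g' \sim g} \tf{fn}(g')$ and suppose $\pi(a) = b \neq a$; I must show $\pi \act \lin g \neq \lin g$, i.e.\ $\pi \act g' \not\sim g'$ for the representatives. The idea is that because strong axioms never \emph{equate two distinct atoms} and never \emph{erase an occurrence of an atom that is not "shadowed" by a dropped-variable position}, an atom in the intersection $\supp{}{\lin g}$ must occur in a position that is preserved by all $\T$-rewrites — one formalises "position preserved by all rewrites" via a notion of \emph{stable position}, shows by induction on derivations that $a \in \supp{}{\lin g}$ implies $a$ occurs at a stable position in every representative, and then observes that renaming a stably-occurring atom genuinely changes the class. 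The main obstacle, and the part that needs the most care, is pinning down this notion of stable/rigid occurrence and verifying the induction step for the $(\tf{ax})$ rule using precisely conditions (1)–(3): condition (1) ensures no abstractions and no bare atoms in axioms (so atom occurrences are never bound or equated), and conditions (2)–(3) ensure that the linear/ordering structure of variables is respected so that the rewrite cannot "smuggle" an atom from a stable position into an unstable one. I expect the rest — equivariance of $\sim$, that $\supp{}{\lin g}$ is finite, compatibility with Proposition 2.30 of \cite{book/Pitts} — to be routine bookkeeping.
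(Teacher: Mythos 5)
Your setup coincides with the paper's: you reduce the claim to showing that $\pi \act \lin{g} = \lin{g}$ forces $\pi \in \Fix{\supp{}{\lin{g}}}$, you argue by contradiction from an atom $a \in \supp{}{\lin{g}}$ with $\pi(a) = b \neq a$, and you correctly notice (via $\tf{Lproj}$) that the tempting lemma ``$\vdash_\T g = h$ implies $\tf{fn}(g) = \tf{fn}(h)$'' is false, so that one must work with the intersection $\bigcap_{g'\in\lin{g}}\tf{fn}(g')$. Up to that point you and the paper agree.

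The gap is in the one step that carries the theorem: why $\pi \act g$ is not $\T$-equal to $g$. You propose to formalise a notion of \emph{stable position} and to show that a support atom occupies a position preserved by all $\T$-rewrites; but that invariant fails for exactly the theories the theorem covers. Under associativity the occurrence of $a$ in $\tf{f}(\tf{f}(a,b),c)$ moves from position $1.1$ to position $1$ in $\tf{f}(a,\tf{f}(b,c))$, and under idempotency $\tf{g}(X,X)=X$ even the number of occurrences of an atom changes; no occurrence of a support atom is positionally stable, so ``renaming a stably-occurring atom genuinely changes the class'' cannot be established in the form you state it. The invariant the paper actually uses is not position but \emph{relative left-to-right order of atom occurrences}: conditions (2)--(3) of Definition~\ref{def:strong_ax} say precisely that the variable orderings $<_t$ and $<_u$ on the two sides of an axiom are compatible, hence every application of $(\mathbf{ax})$ preserves the order in which the free atoms contributed by variable instances appear, and this order is therefore invariant along any derivation of a $\T$-equality. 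If $a$ occurs before $b$ in $g$ and $\pi$ exchanges them, then $b$ occurs before $a$ in $\pi\act g$, and no derivation built from order-preserving steps can identify the two terms --- that is the contradiction. Until you replace ``stable position'' by this order invariant (and treat separately the case where $\pi(a)=b$ for an atom $b$ not occurring in $g$, which the paper handles by showing that atoms introduced by variables occurring on only one side of an axiom cannot lie in the intersection defining $\supp{}{\lin{g}}$), the core of the proof is missing.
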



\begin{proof}
    First, as observed in Example~\ref{ex:nominal-and-strong-sets}(\ref{ex:ground-algebra}) the set $\lin{\F}(\T,\Sigma)$ is a nominal set and $\supp{}{\lin{g}} = \bigcap\{\supp{}{x}\mid x\in \lin{g}\}$ for all $\lin{g}\in \lin{\F}(\T,\Sigma)$. It remains to prove that $\supp{}{\lin{g}}$ is a strong support, that is, for all permutations $\pi$, the following holds
    \[
        \pi\fix{\tt sem} \lin{g} \Longrightarrow \pi\in\Fix{\supp{}{\lin{g}}}.
    \]
    Assume, by contradiction, that $\pi\fix{\tt sem} \lin{g}$ and $\pi\notin\Fix{\supp{}{\lin{g}}}$, that is, there is an atom $a\in\supp{}{\lin{g}}$ such that $\pi(a) \neq a$; let's assume $\pi(a) = b$. 

  From $\pi\fix{\tt sem} \lin{g}$ it follows, by definition, that  $\lin{\pi\cdot g}=\pi \cdot \lin{g}=\lin{g}$. Thus, $\vdash_{\T}\pi\cdot g=g$ (see Example~\ref{ex:nominal-and-strong-sets}(\ref{ex:ground-algebra})).  
     Now, we proceed by analysing the derivation of $\vdash_\T \pi\cdot g=g$:
     \begin{itemize}
        \item If rule {\bf (ax)} is not applied in the derivation, then $\vdash \pi\cdot g=g$. Since the application of a permutation does not change the structure of terms, it must be the case that $\pi\cdot c=c$, for every free atom $c$ of $g$. This contradicts the assumption above that $\pi(a)=b$. Therefore, $\pi\in\Fix{\supp{}{\lin{g}}}$ and the result follows.
      \item If rule {\bf (ax)} is applied in the derivation of $\vdash_\T \pi\cdot g=g$, assume it is the last step (the reasoning generalises by induction). Suppose the axiom $ \vdash t=u \in Ax$ (in $\T$) is used. Since the axiom is strong, the order of the atoms $a,b$, that are introduced by instantiation of the variables in the axiom, is preserved in  both  sides.  Thus, the order of these atoms is preserved in the derivation of $\vdash_\T \pi\cdot g=g$. This contradicts the hypothesis that $\pi(a)=b$: if $a$ occurs before $b$ in an instance of $t$ that is equal to 
      $g$ (in an application of {\bf (ax)}), then  $b$ will occur before $a$, in $\pi\cdot g$, and this contradicts the fact that $\T$ is strong.  Therefore,  $\pi\in\Fix{\supp{}{\lin{g}}}$, and the result follows.
    
     \end{itemize}
\end{proof}


\section{Recovering Soundness: Alternative Approaches}\label{sec:recovering}

In the previous section we proved soundness with respect to strong models (where carriers are strong nominal sets). Here 
our goal is to recover soundness while using a nominal set semantics. For this, we need to address the mismatch between the derivations and the notion of validity. We can proceed in different ways:
\begin{enumerate}

 \item we can change the form of the judgements and adapt the rules to ensure only the ones that are valid in all models can be derived; or 
    \item we do not change the judgements but we change the derivation rules to prevent the derivation of judgements that are not valid in all models.

\end{enumerate}

The challenge with the second approach is to ensure the system is  sufficiently powerful to derive all the valid judgements. The challenge with the first is to ensure that we do not lose expressive power. In this section we show how to address these challenges: we define two sound  systems for nominal algebra with fixed-point constraints and nominal set semantics, one  with similar rules as before but stronger judgements and  another with the same judgements as before but stronger rules.



\subsection{Using Strong Fixed-Point Contexts}\label{ssec:strong_judgements}
In this section we present an alternative proof system for  nominal algebra with fixed-point constraints. It restricts the form of fixed-point contexts in judgements, which will now be called {\it strong fixed-point contexts} and include a $\new$-quantifier~\cite{PittsA:noml} to keep track of new atoms in fixed-point constraints.


\begin{definition}[Strong Fixed-Point Context] 
A \emph{strong fixed-point context} is a finite set $\Upsilon_{A,C}$ that contains only primitive constraints of the form  $\new c.(a \ c) \fix{} X$ such that $a \in A$ and $c \in C$, for two given disjoint set of atoms $A$ and $C$. We will move the $\new$-quantifiers to the front and write $\new C. \Upsilon_{A,C}$. We will omit the indices $A$, $C$ when there is no ambiguity. As before, we will use the symbols $\Upsilon, \Psi, \ldots$ to represent strong contexts.   
 \end{definition}

Intuitively,  strong fixed-point contexts contain only fixed-point constraints that correspond directly to freshness constraints if we assume that the set $C$ contains new atoms: $\new c. (a \ c) \fix{} X$ corresponds to  $a \fresh X$.


 \begin{definition}[Strong Judgements]\label{def:strong-judgment}
  A {\emph{strong fixed-point judgement}} has the form $\new \lin{c}. (\Upsilon_{A,\lin{c_0}} \vdash \pi \fix{} t)$  where $\Upsilon_{A,\lin{c_0}}$ is a strong fixed-point context and $\lin{c_0}\subseteq \lin{c}$. Similarly, a strong  $\alpha$-equality judgement has the form  $\new \lin{c}. (\Upsilon_{A,\lin{c_0}} \vdash s \falphaeq{} t)$ where $\Upsilon_{A,\lin{c_0}}$ is a strong fixed-point context and $\lin{c_0}\subseteq \lin{c}$.
 \end{definition}
 
Note that the $\new$ quantifiers at the front of the judgement quantify the constraints in the context: for example in the judgement $\new c_1,c_2. (a\ c_1) \fix{} X, (a\ c_2) \fix{} X \vdash s \falphaeq{} t$ we have $\new c_1. (a\ c_1) \fix{} X$ in the context.


The alternative proof system is given in \Cref{fig:fixed-rules_new2}. It is a modified version of a system introduced in~\cite{DBLP:journals/lmcs/Ayala-RinconFN19}, that also used $\new$ in the derivable constraints (on the rhs of the $\vdash$), but did not use strong fixed-point contexts. In \Cref{fig:fixed-rules_new2} we adapt the notations $\dom{\perm{\Upsilon_{A,\lin{c}}|_X}} = \{a \mid (a \ c)\fix{} X\in \Upsilon_{A,\lin{c}}\}$.





\begin{figure}[!t]
     \centering

{\small 
     \begin{tabular}{lr}
         \hline
         & \\
         \AxiomC{$\pi(a) = a$}
         \RightLabel{$(\fix{} \textbf{a})$}
         \UnaryInfC{$ \new\lin{c}.\stupsilon{A}{\lin{c_0}} \vdash \pi\fix{} a$}
         \DisplayProof 
         & 
         \AxiomC{$\dom{\pi^{\rho^{-1}}}\setminus\lin{c} \subseteq \dom{\perm{\stupsilon{A}{\lin{c_0}}|_X}}$}
         \RightLabel{$(\fix{} \textbf{var})$}
         \UnaryInfC{$\new\lin{c}.\stupsilon{A}{\lin{c_0}}\vdash \pi\fix{} \rho\act X$}
         \DisplayProof \\[0.5cm]
         \AxiomC{$\new\lin{c}.\stupsilon{A}{\lin{c_0}} \vdash \pi\fix{} t_1 \quad \ldots \quad \new\lin{c}.\stupsilon{A}{\lin{c_0}} \vdash  \pi\fix{} t_n$}
         \RightLabel{$(\fix{} {\tt f})$}
         \UnaryInfC{$\new\lin{c}.\stupsilon{A}{\lin{c_0}} \vdash \pi\fix{} {\tt f}(t_1,\ldots,t_n)$}
         \DisplayProof 
         &
          \AxiomC{$\new\lin{c},c_1.\stupsilon{A}{\lin{c_0}} \vdash \pi\fix{} (a \ c_1)\act t$}
         \RightLabel{$(\fix{} \textbf{abs})$}
         \UnaryInfC{$\new\lin{c}.\stupsilon{A}{\lin{c_0}} \vdash \pi\fix{} [a]t$}
          \DisplayProof  
     \end{tabular}
         \begin{tabular}{l@{ \hspace{-3
     cm} }r}
         & \\
         \AxiomC{ }
         \RightLabel{$(\falphaeq{} \textbf{a})$}
         \UnaryInfC{$ \new\lin{c}.\stupsilon{A}{\lin{c_0}} \vdash a\falphaeq{} a$}
         \DisplayProof &            \AxiomC{$\dom{\rho^{-1}\circ\pi}\setminus\lin{c}\subseteq\dom{\perm{\stupsilon{A}{\lin{c_0}}|_X}}$}
         \RightLabel{$(\falphaeq{} \textbf{var})$}
         \UnaryInfC{$ \new\lin{c}.\stupsilon{A}{\lin{c_0}} \vdash \pi\act X \falphaeq{} \rho\act X$}
         \DisplayProof \\[0.5cm]
       \AxiomC{$ \new\lin{c}.\stupsilon{A}{\lin{c_0}} \vdash t_1 \falphaeq{} t_1' \quad \ldots \quad  \new\lin{c}.\stupsilon{A}{\lin{c_0}} \vdash t_n\falphaeq{} t_n'$}
         \RightLabel{$(\falphaeq{} {\tt f}$)}
         \UnaryInfC{$ \new\lin{c}.\stupsilon{A}{\lin{c_0}} \vdash {\tt f}(t_1,\ldots,t_n) \falphaeq{} {\tt f}(t_1',\ldots,t_n')$}
         \DisplayProof & 
         \AxiomC{$\new\lin{c}.\stupsilon{A}{\lin{c_0}} \vdash  t \falphaeq{} t'$}
         \RightLabel{$(\falphaeq{} \textbf{[a]})$}
         \UnaryInfC{$\new\lin{c}.\stupsilon{A}{\lin{c_0}} \vdash [a]t \falphaeq{} [a]t'$}
         \DisplayProof  \\[0.5cm]
         \AxiomC{$\new\lin{c}.\stupsilon{A}{\lin{c_0}} \vdash s \falphaeq{} (a \ b)\act t \qquad \new\lin{c}, c_1.\stupsilon{A}{\lin{c_0}}\vdash (a \ c_1) \fix{} t$}
         \RightLabel{$(\falphaeq{} \textbf{ab})$}
         \UnaryInfC{$\new\lin{c}.\stupsilon{A}{\lin{c_0}} \vdash [a]s \falphaeq{} [b]t$}
         \DisplayProof & \\
         &\\
         \hline
     \end{tabular}
     }
     \caption{Derivation rules for strong judgements. Here, $\lin{c}$ denotes a list of distinct atoms $c_1,\ldots,c_n$. In all the rules $\lin{c_0}\subseteq \lin{c}$.}
     \label{fig:fixed-rules_new2}
 \end{figure}

The following correctness result states that, under strong fixed-point judgements, the fixed-point constraint $\fix{}$ is still a fixed-point relation.

\begin{theorem}[Correctness]\label{thm:strong_correctness}
$\new \lin{c}.\stupsilon{A}{\lin{c_0}}\vdash \pi\fix{}t $ iff ~$\new \lin{c}.\stupsilon{A}{\lin{c_0}}\vdash \pi\cdot t\falphaeq{} t$, where $\lin{c_0}\subseteq\lin{c}$.

\end{theorem}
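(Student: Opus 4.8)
The statement is a biconditional relating a fixed-point judgement $\new \lin{c}.\stupsilon{A}{\lin{c_0}}\vdash \pi\fix{}t$ to an $\alpha$-equality judgement $\new \lin{c}.\stupsilon{A}{\lin{c_0}}\vdash \pi\cdot t\falphaeq{} t$ in the strong system of \Cref{fig:fixed-rules_new2}. My plan is to prove both directions simultaneously by induction on the structure of $t$ (equivalently, on the height of the derivations), keeping the list $\lin{c}$ of new atoms and the strong context $\stupsilon{A}{\lin{c_0}}$ as parameters that may grow in the abstraction case. Before starting the induction I would record two auxiliary facts that the argument relies on: first, that $\dom{\pi^{\rho^{-1}}} = \dom{\rho^{-1}\circ\pi\circ\rho}$ and $\dom{\rho^{-1}\circ\pi}$ agree modulo the obvious bijection on the side conditions of $(\fix{}\textbf{var})$ and $(\falphaeq{}\textbf{var})$ — more precisely, that $\pi\cdot(\rho\cdot X)\falphaeq{}\rho\cdot X$ is derivable iff $\dom{\rho^{-1}\circ\pi}\setminus\lin{c}\subseteq\dom{\perm{\stupsilon{A}{\lin{c_0}}|_X}}$, which should match the side condition of $(\fix{}\textbf{var})$ applied to $\pi\fix{}\rho\cdot X$ after noting $(\rho\cdot X) = \id\cdot(\rho\cdot X)$ and computing $\dom{\pi^{\rho^{-1}}}$; second, a substitution/weakening lemma showing the new-atom list behaves well, i.e.\ that $\new\lin{c},c_1.\stupsilon{A}{\lin{c_0}}\vdash J$ derivations compose as needed.

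\textbf{Base and inductive cases.} For $t\equiv a$ an atom: $\pi\fix{}a$ is derivable via $(\fix{}\textbf{a})$ iff $\pi(a)=a$, and $\pi\cdot a\falphaeq{}a$, i.e.\ $\pi(a)\falphaeq{}a$, is derivable via $(\falphaeq{}\textbf{a})$ iff $\pi(a)=a$ — so both sides are equivalent to the same arithmetic side condition. For $t\equiv\rho\cdot X$ a suspension: $\pi\fix{}\rho\cdot X$ comes from $(\fix{}\textbf{var})$ with side condition $\dom{\pi^{\rho^{-1}}}\setminus\lin{c}\subseteq\dom{\perm{\stupsilon{A}{\lin{c_0}}|_X}}$, while $\pi\cdot(\rho\cdot X) = (\pi\circ\rho)\cdot X$, so $(\pi\circ\rho)\cdot X\falphaeq{}\rho\cdot X$ comes from $(\falphaeq{}\textbf{var})$ with side condition $\dom{\rho^{-1}\circ(\pi\circ\rho)}\setminus\lin{c}\subseteq\dom{\perm{\stupsilon{A}{\lin{c_0}}|_X}}$; since $\rho^{-1}\circ\pi\circ\rho = \pi^{\rho^{-1}}$ (using the paper's conjugate convention $\pi^\rho = \rho\circ\pi\circ\rho^{-1}$, applied with $\rho^{-1}$), the two side conditions coincide, giving the equivalence. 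For $t\equiv\tf{f}(t_1,\ldots,t_n)$: $\pi\cdot t = \tf{f}(\pi\cdot t_1,\ldots,\pi\cdot t_n)$, and $(\fix{}{\tt f})$ reduces $\pi\fix{}t$ to the conjunction of $\pi\fix{}t_i$, while $(\falphaeq{}{\tt f})$ reduces $\pi\cdot t\falphaeq{}t$ to the conjunction of $\pi\cdot t_i\falphaeq{}t_i$; these match componentwise by the induction hypothesis. For $t\equiv[a]s$: $\pi\cdot t = [\pi(a)](\pi\cdot s)$, and $(\fix{}\textbf{abs})$ reduces $\pi\fix{}[a]s$ to $\new\lin{c},c_1.\stupsilon{A}{\lin{c_0}}\vdash\pi\fix{}(a\ c_1)\cdot s$; on the other side $(\falphaeq{}\textbf{ab})$ reduces $[\pi(a)](\pi\cdot s)\falphaeq{}[a]s$ to two premises $\new\lin{c}.\stupsilon{A}{\lin{c_0}}\vdash \pi\cdot s\falphaeq{}(\pi(a)\ a)\cdot s$ and $\new\lin{c},c_1.\stupsilon{A}{\lin{c_0}}\vdash(\pi(a)\ c_1)\fix{}s$. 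Here I would use the induction hypothesis on $s$ together with the permutation calculus identity $(a\ c_1)\cdot s$ versus composing $\pi$ with the renaming, plus the fact that $c_1$ is new (so $(a\ c_1)\fix{}s$-type constraints absorb the action of $\pi$ on the fresh name), to convert between $\pi\fix{}(a\ c_1)\cdot s$ and the pair of $\falphaeq{}$-premises.

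\textbf{Main obstacle.} The routine cases (atom, function symbol) are immediate, and the suspension case is essentially a bookkeeping identity about conjugation and domains. The real work is the abstraction case, where the two rule shapes do not line up syntactically: $(\fix{}\textbf{abs})$ has a single premise that pushes $\pi$ through one fresh renaming $(a\ c_1)$, whereas $(\falphaeq{}\textbf{ab})$ splits into an equality premise and a separate fixed-point premise about $\pi(a)$. Reconciling them requires carefully tracking how $\pi$ interacts with the freshly introduced atom $c_1$ and with the swap $(\pi(a)\ a)$ — in particular handling the subcase $\pi(a)=a$ (where $(\falphaeq{}\textbf{ab})$ degenerates, the equality premise becoming reflexivity-like and the fixed-point premise becoming $(a\ c_1)\fix{}s$) separately from $\pi(a)\neq a$, and invoking the induction hypothesis not just on $s$ but on suitably permuted/renamed instances of $s$ under an enlarged new-atom list. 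I expect this case to consume the bulk of the proof; everything else follows by matching rules one-to-one.
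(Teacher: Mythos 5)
Your proposal is correct and follows essentially the same route as the paper, whose entire proof of Theorem~\ref{thm:strong_correctness} is the one-line remark that it ``follows by induction on the derivations'': your induction on the structure of $t$ (equivalently on derivation height, which is what is needed since the premise of $(\fix{}\textbf{abs})$ mentions $(a\ c_1)\cdot t$ rather than a structural subterm) is that same induction, and you correctly match $(\fix{}\textbf{a})$ with $(\falphaeq{}\textbf{a})$, $(\fix{}\textbf{var})$ with $(\falphaeq{}\textbf{var})$ via $\pi^{\rho^{-1}}=\rho^{-1}\circ\pi\circ\rho$, and identify the abstraction case, requiring equivariance and freshness of $c_1$, as the only nontrivial step. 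Your plan is in fact a more detailed elaboration than what the paper records.
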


\begin{proof}
    The proof follows by induction on the derivations.\qed
\end{proof}

\subsubsection{From $\#$ to $\fix{}$ and back.} We recall some basic definitions of nominal algebra with freshness constraints. A freshness constraint is a pair of the form $a\#t$ where $a$ is an atom and $t$ is a nominal term, and it denotes the fact that `$a$ is fresh for $t$', that is, if $a$ occurs in $t$ it must occur abstracted. A freshness context (denoted $\Delta,\nabla$) is a finite set of primitive freshness constraints of the form $a\#X$.
An $\alpha$-equality constraint in the nominal algebra with $\#$ is denoted $s\approx t$, where $s$ and $t$ are nominal terms. A  judgement has the form $\Delta\vdash a\# t$ or $\Delta\vdash s\approx t$. The standard proof system for deriving freshness judgements is defined by the rules in~\Cref{fig:freshness_rules}.

\begin{figure}[!t]
\hrule
{\small 
\begin{mathpar}
\inferrule{\mbox{}}{\Delta\vdash a\#b}(\fresh a)
\and
\inferrule{\pi^{-1}(a)\#X\in \Delta}{\Delta \vdash a\#\pi\cdot X}(\fresh var)
\and 
\inferrule{\quad }{\Delta \vdash a\#[a]t}(\fresh [a])
\and 
\inferrule{\Delta\vdash a\#t}{\Delta \vdash a\#[b]t}(\fresh abs)
\and 
\inferrule{\Delta\vdash a\#t_1 \\ \ldots \\ \Delta\vdash a\#t_n}{\Delta \vdash a\#{\tt f}(t_1,\ldots, t_n)}(\fresh \tf{f})
\and 
\inferrule{\quad }{\Delta \vdash a\approx a}(\approx a)
\and 
\inferrule{ds(\pi,\pi')\# X\subseteq \Delta}{\Delta\vdash \pi\cdot X\approx \pi'\cdot X}(\approx var)
\and 
\inferrule{\Delta\vdash s\approx t}{\Delta \vdash [a]s\approx[a]t}(\approx [a])
\and 
\inferrule{\Delta\vdash s\approx (a\ b)\cdot  t\\ \Delta \vdash a\# t}{\Delta \vdash [a]s\approx[b]t}(\approx ab)
\and 
\inferrule{\Delta\vdash s_1\approx t_1\\ \ldots \\ \Delta\vdash s_n\approx t_n}{\Delta \vdash {\tt f}(s_1,\ldots, s_n)\approx {\tt f}(t_1,\ldots, t_n)}(\approx \tf{f})
\end{mathpar}
}
\hrule
\caption{Standard derivation rules for freshness and $\alpha$-equivalence. Here, $\tf{ds}(\pi,\pi') = \{a\in\A\mid \pi(a) \neq \pi'(a)\}$.}\label{fig:freshness_rules}
\end{figure}

Recall that the notion of an atom $a$ being fresh for an element $x$ in nominal set is defined using the \new-quantifier and a fixed-point equation  as follows~\cite{book/Pitts}:
$$ a\# x \iff \new c. (a \ c)\cdot x=x.$$

We explore this relationship and define mappings that translate strong judgements using $\fix{}$ and $\falphaeq{}$ into judgments using $\#$ and $ \approx$, and vice-versa. Below we denote by $\mathfrak{F}_\fresh$ the family of (primitive) freshness constraints, and by $\mathfrak{F}_{\fix{}}$ the family of primitive strong fixed-point constraints. The mapping $[\cdot]_{\fix{}}$ associates each primitive freshness constraint with a primitive strong fixed-point constraint; it extends to freshness contexts in a natural way
\begin{align*}
    [\cdot]_{\fix{}}\colon \mathfrak{F}_\fresh &\to \mathfrak{F}_{\fix{}}\\
    a\fresh X &\mapsto \new c_a. (a \ c_a)\fix{} X 
\end{align*}
We denote by $[\Delta]_{\fix{}}$ the image of $\Delta$ under $[\cdot]_{\fix{}}$. The mapping $[\cdot]_\fresh$ associates each primitive strong fixed-point constraint with a freshness
constraint; it extends to fixed-point contexts in a natural way.
\begin{align*}
    [\cdot]_{\fresh}\colon \mathfrak{F}_{\fix{}} &\to \mathfrak{F}_{\fresh}\\
    \new c.(a \ c)\fix{} X &\mapsto a\fresh X.
\end{align*}


Our translations map strong fixed-point judgments into freshness judgements. The next result establishes that every derivable freshness judgement can be mapped, via $[\cdot ]_{\fix{}}$ to a strong fixed-point constraint. 
\begin{theorem}[From and to \mbox{$[\cdot ]_{\fix{}}$}]\label{thm:transl1}
The following hold, for some $\lin{c}$ (possibly empty):
    \begin{enumerate}
        \item $\Delta\vdash a\# t \iff \new \lin{c},c'.[\Delta]_{\fix{}}\vdash (a \ c') \fix{} t$.
        \item $\Delta\vdash s\approx t \iff \new \lin{c}.[\Delta]_{\fix{}}\vdash s\falphaeq{} t$.
    \end{enumerate}
\end{theorem}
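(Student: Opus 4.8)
\textbf{Proof plan for Theorem~\ref{thm:transl1}.}

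The plan is to prove both biconditionals simultaneously by a single induction on the structure of derivations, handling each direction by inspecting the last rule applied. The two parts are mutually dependent: the $\alpha$-equivalence rule $(\approx ab)$ (respectively $(\falphaeq{}\textbf{ab})$) has a freshness (respectively fixed-point) premise, and the abstraction rule $(\fix{}\textbf{abs})$ introduces a fresh name that must be tracked through the $\new$-prefix, so the induction hypothesis must be available for both judgement forms at once. The crucial bookkeeping observation, which I would isolate as a preliminary lemma, is that $[\cdot]_{\fix{}}$ and $[\cdot]_\fresh$ are mutually inverse up to the choice of the new atoms in $C$: given a freshness context $\Delta$, the strong context $[\Delta]_{\fix{}}$ is $\stupsilon{A}{\lin{c_0}}$ with $A = \{a \mid a\fresh X \in \Delta \text{ for some } X\}$ and one fresh $c_a$ per constraint, and $[[\Delta]_{\fix{}}]_\fresh = \Delta$. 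I would also record that $\dom{\pi^{-1}(a)\#X \in \Delta}$-style side conditions match up: for a strong context $[\Delta]_{\fix{}}$ we have $b \in \dom{\perm{[\Delta]_{\fix{}}|_X}}$ iff $b\#X \in \Delta$, so the side condition $\dom{\pi^{\rho^{-1}}}\setminus\lin{c} \subseteq \dom{\perm{\stupsilon{A}{\lin{c_0}}|_X}}$ of $(\fix{}\textbf{var})$ becomes exactly $ds(\pi,\rho)\setminus\lin{c} \subseteq \{b \mid b\#X\in\Delta\}$, matching $(\approx var)$ after we note $ds(\pi,\rho) = \dom{\rho^{-1}\circ\pi}$ and, for part~1, $\{a\}$ modulo the relation $\dom{\pi^{\rho^{-1}}} = \rho(ds(\pi,\rho))$ — this small permutation-conjugation identity is worth stating explicitly.

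For the forward direction of part~1, I would induct on the derivation of $\Delta\vdash a\# t$. The base case $(\fresh a)$ gives $\Delta\vdash a\#b$ with $a\neq b$; correspondingly $\new c'. [\Delta]_{\fix{}}\vdash (a\ c')\fix{} b$ holds by $(\fix{}\textbf{a})$ since $(a\ c')(b) = b$ once $c' \notin\{a,b\}$, which we may assume since $c'$ is new. The $(\fresh var)$ case, with premise $\pi^{-1}(a)\#X\in\Delta$, translates to needing $\dom{(a\ c')^{\pi^{-1}}}\setminus\lin{c} \subseteq \dom{\perm{[\Delta]_{\fix{}}|_X}}$; computing $(a\ c')^{\pi^{-1}} = (\pi^{-1}(a)\ \pi^{-1}(c'))$ and choosing $c'$ with $\pi^{-1}(c') = c'$ (new, hence outside $\dom\pi$), this reduces to $\{\pi^{-1}(a)\}\setminus\lin{c} \subseteq \{b\mid b\#X\in\Delta\}$, which holds by the premise. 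The $(\fresh[a])$, $(\fresh abs)$, $(\fresh\tf{f})$ cases map directly onto $(\fix{}\textbf{a})$-for-the-abstracted-atom, $(\fix{}\textbf{abs})$, and $(\fix{}\tf{f})$ respectively, using the induction hypothesis and, for $(\fresh abs)$, extending the $\new$-prefix with a new name $c_1$ as the rule $(\fix{}\textbf{abs})$ requires — here one checks that adding $c_1$ to $\lin c$ does not disturb the context, since $c_1$ does not occur in $[\Delta]_{\fix{}}$. The backward direction of part~1 is the same analysis read bottom-up: each fixed-point rule's conclusion, when the permutation is a swap $(a\ c')$ with $c'$ a name from the $\new$-prefix, forces the shape of $t$ and the corresponding freshness rule applies, using $[[\Delta]_{\fix{}}]_\fresh = \Delta$ and the side-condition correspondence above.

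Part~2 is then proved by the same dovetailed induction, now on derivations of $s\approx t$ (forward) and $s\falphaeq{} t$ (backward), with the cases $(\approx a)$, $(\approx var)$, $(\approx[a])$, $(\approx\tf{f})$ matching $(\falphaeq{}\textbf{a})$, $(\falphaeq{}\textbf{var})$, $(\falphaeq{}\textbf{[a]})$, $(\falphaeq{}\tf{f})$ routinely, and the only substantive case being $(\approx ab)$ versus $(\falphaeq{}\textbf{ab})$: the premise $\Delta\vdash a\#t$ is converted via part~1 (this is where the mutual dependence bites) into $\new\lin{c},c_1.[\Delta]_{\fix{}}\vdash (a\ c_1)\fix{} t$, which is precisely the fixed-point premise of $(\falphaeq{}\textbf{ab})$, while the other premise $s\approx (a\ b)\cdot t$ is handled by the part~2 induction hypothesis. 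I expect the main obstacle to be managing the existentially-quantified list $\lin c$ cleanly — the statement says ``for some $\lin c$ (possibly empty)'', so in the inductive steps that consume an abstraction one must enlarge $\lin c$, and one must verify that the freshly introduced names never collide with atoms already appearing in $s$, $t$, or $\Delta$; formalizing this ``choose a sufficiently fresh name'' step, and checking that it is compatible with the equivariance/$\alpha$-renaming of $\new$-bound names (so that the particular witnesses chosen do not matter), is the delicate part. Everything else reduces to the permutation-conjugation identities and the context-translation bookkeeping collected in the preliminary lemma.
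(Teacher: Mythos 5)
Your overall strategy --- induction on the last rule of the derivation, with Parts 1 and 2 proved by a mutual induction so that the freshness premise of $(\approx ab)$ can be discharged via Part 1, plus the bookkeeping that $[\cdot]_{\fix{}}$ and $[\cdot]_{\fresh}$ are mutually inverse and that the variable side conditions match up --- is the same as the paper's, and most of your case analysis (the $(\fresh var)$ computation with $c'\notin\dom{\pi}$, the $(\fresh abs)$ case via equivariance and enlargement of the $\new$-prefix, the $(\approx ab)$/$(\falphaeq{}\textbf{ab})$ interplay) lines up with the paper's appendix proof.

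There is, however, one genuine gap: the case where the last rule is $(\fresh [a])$, i.e.\ the axiom $\Delta\vdash a\#[a]t'$. You claim this ``maps directly onto $(\fix{}\textbf{a})$-for-the-abstracted-atom,'' but $t'$ here is an arbitrary term, not an atom, and the only rule that can conclude $\new\lin{c},c'.[\Delta]_{\fix{}}\vdash (a\ c')\fix{}[a]t'$ is $(\fix{}\textbf{abs})$, whose premise is $\new\lin{c},c',c_1.[\Delta]_{\fix{}}\vdash (a\ c')\fix{}(a\ c_1)\act t'$. Since $(\fresh [a])$ has no premises, there is no induction hypothesis available to produce this, so the case does not ``map directly'' onto anything: you must prove, by a separate inner induction on the structure of $t'$, that for new names $c',c_1$ the judgement $(a\ c')\fix{}(a\ c_1)\act t'$ is always derivable. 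This inner induction is where most of the real work of Part 1 lives in the paper: the suspension case reduces the side condition of $(\fix{}\textbf{var})$ to $\dom{(c_1\ c')}\setminus(\lin{c}\cup\{c_1,c'\})=\emptyset$, and the nested-abstraction case $t'\equiv[d]s'$ splits on whether $d=a$, the sub-case $d=a$ requiring an extra renaming $(c_2\ c_1)$ and a $\new$-quantifier manipulation. Relatedly, the Equivariance lemma for the $\fix{}$/$\falphaeq{}$ derivation systems (for permutations $\rho$ with $\dom{\rho}\cap\lin{c}=\emptyset$) is not merely a hygiene check on the choice of $\new$-bound witnesses as you suggest at the end; it is an auxiliary result that must be stated and proved by its own induction, since both the $(\fresh abs)$ case and the inner induction above use it to convert a derivation of $\pi\fix{} t'$ into one of $\pi\fix{}(b\ c_1)\act t'$.
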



\begin{proof}
    The proof is by induction on the last rule applied in the derivation. We will prove only the interesting case involving the abstraction rule for item (i). The proof for item (ii) is analogous. 
    \begin{description}
         \item[$(\Rightarrow):$] 
            If the last rule applied is $(\fresh abs)$, then $t\equiv [b]t'$ and there is a derivation
            \begin{prooftree}
                \AxiomC{$\Delta \vdash a\fresh t'$}
                \RightLabel{$(\fresh abs)$}
                \UnaryInfC{$\Delta \vdash a\fresh [b]t'$}
            \end{prooftree}
            By induction, there exists a proof $\new \lin{c},c'.([\Delta]_{\fix{}})_{A,\lin{c_0}} \vdash  (a\ c') \fix{} t'$. Note now that for any fresh name $c_1$ for the derivation $\new \lin{c},c'.([\Delta]_{\fix{}})_{A,\lin{c_0}} \vdash (a\ c') \fix{} t'$, the permutation $\rho=(b \ c_1)$ is such that $\dom{\rho}\cap(\lin{c}\cup\{c'\})=\emptyset$. Using Equivariance, we have $\new \lin{c},c'.([\Delta]_{\fix{}})_{A,\lin{c_0}} \vdash (a\ c') \fix{} (b\ c_1) \cdot t'$. By the definition of $\new$, we can express this as $\new c_1.(\new \lin{c},c'.([\Delta]_{\fix{}})_{A,\lin{c_0}} \vdash (a\ c') \fix{} (b\ c_1) \cdot t')$, implying $\new \lin{c},c',c_1.([\Delta]_{\fix{}})_{A,\lin{c_0}} \vdash (a\ c') \fix{} (b\ c_1) \cdot t'$. We can put $\new c_1$ inside  because $c_1$ is a fresh atom distinct from every atom occurring in $([\Delta]_{\fix{}})_{A,\lin{c_0}}$, and thus, the result we aimed follows by rule $(\fix{} {\bf abs})$.

            \item[$(\Leftarrow):$] If the last rule applied is $(\fix{} {\bf abs})$, then we have two cases.
                
            \begin{itemize}
                    \item The first case is when $t\equiv [a]t'$. In this case, $\Delta \vdash a\fresh [a]t'$ follows trivially by rule $(\fresh [a])$.

                    \item The other case is when $t\equiv [b]t'$. In this case, we have a derivation
                    \begin{prooftree}
                        \AxiomC{$\new \lin{c},c',c_1.([\Delta]_{\fix{}})_{A,\lin{c_0}}\vdash (a \ c')\fix{} (b \ c_1)\act t'$}
                        \RightLabel{$(\fix{} {\bf abs})$}
                        \UnaryInfC{$\new \lin{c},c'.([\Delta]_{\fix{}})_{A,\lin{c_0}} \vdash (a \ c')\fix{} [b]t'$}
                    \end{prooftree}
                    By the induction hypothesis, we have $\Delta\vdash a\# (b\ c_1)\cdot t'$ and we can build the derivation
                    \begin{prooftree}
                        \AxiomC{$\Delta\vdash a\# (b\ c_1)\cdot t'$}
                        \RightLabel{$(\fresh abs)$}
                        \UnaryInfC{$\Delta\vdash a\# [c_1](b\ c_1)\cdot t'$}
                    \end{prooftree}
                    Given the equivalence $\Delta \vdash a \fresh [c_1](b\ c_1)\cdot t' \iff \Delta \vdash a \fresh (b\ c_1)\cdot [b] t'$, we can apply the Equivariance of freshness with $\rho = (b\ c_1)$. This yields: $\Delta\vdash a\#  [b] t'$.\qed
            \end{itemize}
        \end{description}
\end{proof}

The next result states that derivation of certain fixed-point constraints can be mapped, via $[\cdot ]_{\#}$, to derivations of freshness constraints, and vice-versa. Similarly for $\alpha$-equality constraints.

\begin{theorem}[From and to \mbox{$[\cdot ]_{\#}$}]\label{thm:transl2}
The following hold, for $\lin{c_0}\subseteq \lin{c}$:
    \begin{enumerate}
        \item $\new \lin{c},c_1.\stupsilon{A}{\lin{c_0}}\vdash  (a \ c_1)\fix{}t \iff [\stupsilon{A}{\lin{c_0}}]_{\#}\vdash a\# t$.
        \item $\new \lin{c}.\stupsilon{A}{\lin{c_0}}\vdash  s\falphaeq{} t \iff [\stupsilon{A}{\lin{c_0}}]_{\#}, \lin{\lin{c}\#\var{s,t}}\vdash s\approx t$.
    \end{enumerate}
\end{theorem}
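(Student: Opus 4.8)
The plan is to prove both equivalences by induction on the structure of derivations, leveraging the two earlier translation results. For item (i), the key observation is that $\new \lin{c},c_1.\stupsilon{A}{\lin{c_0}}\vdash (a\ c_1)\fix{}t$ is, by Theorem~\ref{thm:transl1}(1) read in the right-to-left direction together with the definition of the mappings $[\cdot]_{\fix{}}$ and $[\cdot]_{\#}$, essentially the same judgement as $[\stupsilon{A}{\lin{c_0}}]_{\#}\vdash a\# t$ — provided one checks that applying $[\cdot]_{\#}$ and then $[\cdot]_{\fix{}}$ to a strong context returns it up to renaming of the bound new atoms, and that such renaming is harmless because the $\new$-quantified atoms in $\lin{c}$ do not occur in $t$ or in the domains of the suspended permutations except as the freshly introduced witnesses. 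So the first step is to establish a small \emph{round-trip lemma}: $[[\stupsilon{A}{\lin{c_0}}]_{\#}]_{\fix{}}$ is $\alpha$-equivalent (as a strong context) to $\stupsilon{A}{\lin{c_0}}$, and conversely $[[\Delta]_{\fix{}}]_{\#} = \Delta$ on the nose.

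Second, I would carry out the induction for item (i) following the template of the proof of Theorem~\ref{thm:transl1}: the base cases $(\fix{}\mathbf{a})$/$(\fresh a)$ and $(\fix{}\mathbf{var})$/$(\fresh var)$ are immediate once one matches the side-conditions $\dom{(a\ c_1)^{\rho^{-1}}}\setminus\lin{c}\subseteq\dom{\perm{\stupsilon{A}{\lin{c_0}}|_X}}$ with $\rho^{-1}(a)\# X\in\Delta$, using that $c_1\notin\lin{c}$ so $(a\ c_1)^{\rho^{-1}}=(\rho^{-1}(a)\ \rho^{-1}(c_1))$ and only the $\rho^{-1}(a)$ component survives the removal of $\lin{c}$. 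The congruence cases $(\fix{}\tf{f})$/$(\fresh\tf{f})$ are structural. The interesting case is $(\fix{}\mathbf{abs})$ versus $(\fresh abs)$/$(\fresh [a])$, handled exactly as in Theorem~\ref{thm:transl1}: push a fresh $c_1$ into the $\new$-prefix, use equivariance of $\fix{}$ (respectively of $\#$) to rename the bound atom, and split on whether the abstracted atom coincides with $a$. For item (ii) the argument is parallel but one must track the extra freshness assumptions $\lin{\lin{c}\#\var{s,t}}$ on the right-hand side: these record that the new atoms $\lin{c}$ bound by the $\new$-prefix on the left are fresh for the terms, which is exactly what lets the $(\falphaeq{}\mathbf{ab})$ rule correspond to $(\approx ab)$ and what is needed in the $(\falphaeq{}\mathbf{var})$/$(\approx var)$ case to reconcile $\dom{\rho^{-1}\circ\pi}\setminus\lin{c}\subseteq\dom{\perm{\stupsilon{A}{\lin{c_0}}|_X}}$ with $\tf{ds}(\pi,\rho)\# X\subseteq\Delta$.

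The main obstacle, I expect, will be the careful bookkeeping of the $\new$-bound atoms $\lin{c}$: on the strong-judgement side these atoms are genuinely bound and can be freely $\alpha$-renamed, whereas on the freshness side they reappear as explicit freshness hypotheses in item (ii), so the equivalence is not a literal syntactic identity but holds only modulo a precisely-stated renaming/weakening discipline. I would isolate this as an auxiliary lemma stating that $\new\lin{c}.\stupsilon{A}{\lin{c_0}}\vdash J$ is invariant under $\alpha$-renaming of atoms in $\lin{c}\setminus\lin{c_0}$ and under adding or removing $\new$-bound atoms that do not occur free in $J$ — essentially an equivariance-plus-garbage-collection principle for the $\new$-prefix — and prove it once by induction on derivations using the rules of Figure~\ref{fig:fixed-rules_new2}. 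With that lemma in hand, the two directions of each equivalence reduce to the structural induction sketched above, mirroring the proof of Theorem~\ref{thm:transl1} almost verbatim, and in particular reusing its $(\fix{}\mathbf{abs})$ case.
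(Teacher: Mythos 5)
Your proposal matches the paper's proof in essence: the paper proves both items by structural induction on the last rule of the derivation, matching the side condition $\dom{(a\ c_1)^{\rho^{-1}}}\setminus\lin{c}\subseteq\dom{\perm{\stupsilon{A}{\lin{c_0}}|_X}}$ against $\rho^{-1}(a)\fresh X\in[\stupsilon{A}{\lin{c_0}}]_{\fresh}$ in the variable cases, reconciling $\tf{ds}(\pi,\rho)$ with $\dom{\rho^{-1}\circ\pi}\setminus\lin{c}$ via the extra hypotheses $\lin{\lin{c}\fresh\var{s,t}}$, and handling abstraction by equivariance exactly as in Theorem~\ref{thm:transl1}. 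The round-trip and $\new$-prefix renaming/weakening facts you isolate as explicit auxiliary lemmas are used only implicitly in the paper, so your version is, if anything, slightly more carefully organized.
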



\begin{proof} 
Again, in both cases the proof is by structural induction and follows by analysing the last rule applied in the derivation of the judgement. Here, we choose to prove only the interesting case involving the variable rule

                \begin{enumerate}
                    \item
                    \begin{description}
                        \item[$(\Rightarrow):$]  The last rule applied is $(\fix{} {\bf var})$: 
                    \begin{prooftree}
                        \AxiomC{$ \{\pi^{-1}(a)\} = \dom{(a \ c')^{\pi'^{-1}}}\setminus\lin{c}\cup\{c'\} \subseteq \dom{\perm{\Upsilon_{A,\lin{c_0}}|_X}}$}
                        \RightLabel{$(\fix{} {\bf var})$}
                        \UnaryInfC{$\new\lin{c},c'.\Upsilon_{A,\lin{c_0}}\vdash ( a \ c')\fix{} \pi'\act X$}
                    \end{prooftree}
                    That is, by hypothesis, $\new c_{\pi'^{-1}(a)}.(\pi'^{-1}(a) \ c_{\pi'^{-1}(a)})\fix{}\in \Upsilon_{A,\lin{c_0}}|_X$ and by the translation $[\act]_{\fresh}$ and $[\act]_{\fix{}}$ we have $\pi'^{-1}(a)\#X \in [\Upsilon_{A,\lin{c_0}}|_X]_{\fresh} \subseteq [\Upsilon_{A,\lin{c_0}}]_\fresh$. Then $[\Upsilon_{A,\lin{c_0}}]_\fresh\vdash a\fresh \pi'\act X$ by rule $(\fresh var)$. 

                    \item[$(\Leftarrow):$] The last rule applied is $(\fresh var)$: 
                    \begin{prooftree}
                        \AxiomC{$\pi^{-1}(a)\fresh X\in[\Upsilon_{A,\lin{c_0}}]_\fresh$}
                        \RightLabel{$(\fresh var)$}\UnaryInfC{$[\Upsilon_{A,\lin{c_0}}]_\fresh\vdash a\fresh \pi\act X$}
                    \end{prooftree}
                    By the translation, $\new c_{\pi'^{-1}(a)}.(\pi'^{-1}(a) \ c_{\pi'^{-1}(a)})\fix{} X\in \Upsilon_{A,\lin{c_0}}|_X$ and so $\pi^{-1}(a)$ is in $\dom{\perm{\Upsilon_{A,\lin{c_0}}|_X}}$. Thus $\{\pi^{-1}(a)\} = \dom{(a \ c')^{\pi^{-1}}}\setminus\{\lin{c},c'\}\subseteq\dom{\perm{\Upsilon_{A,\lin{c_0}}|_X}}$ and the result follows by rule $(\fix{} {\bf var})$.
                    \end{description}

                    \item 
                    \begin{description}
                        \item[$(\Rightarrow):$] The last rule applied is $(\falphaeq{} {\bf var})$:
                    
                        In this case $t \equiv \pi\act X,s\equiv \pi'\act X, \var{s,t} = \{X\}$ and there is a derivation 
                    \begin{prooftree}
                        \AxiomC{$\dom{\pi'^{-1}\circ\pi}\setminus\lin{c} \subseteq \dom{\perm{\Upsilon_{A,\lin{c_0}}|_X}}$}
                        \RightLabel{$(\falphaeq{} {\bf var})$}\UnaryInfC{$\new\lin{c}.\Upsilon_{A,\lin{c_0}}\vdash \pi\act X \falphaeq{} \pi'\act X$}
                    \end{prooftree}

                    We want to show that $[\Upsilon_{A,\lin{c_0}}]_\fresh,\lin{\lin{c}\fresh X}\vdash \pi\act X \approx \pi'\act X$. There are two cases to consider: 
                    \begin{itemize}
                    \item If $\dom{\pi'^{-1}\circ\pi}\setminus\lin{c} = \emptyset$, then $\tf{ds}(\pi,\pi') = \dom{\pi'^{-1}\circ\pi} \subseteq \lin{c}$. Consequently, $\tf{ds}(\pi,\pi')\fresh X \subseteq [\Upsilon_{A,\lin{c_0}}]_\fresh, \lin{\lin{c}\fresh X}$ holds true because $\lin{c}\fresh X\subseteq [\Upsilon_{A,\lin{c_0}}]_\fresh, \lin{\lin{c}\fresh X}$. Therefore, $[\Upsilon_{A,\lin{c_0}}]_\fresh, \lin{\lin{c}\fresh X} \vdash \pi\act X \approx \pi'\act X$ follows by $(\fresh var)$.

                    \item Now, suppose $\dom{\pi'^{-1}\circ\pi}\setminus\lin{c} \neq \emptyset$, then there are atoms in $\dom{\pi'^{-1}\circ\pi}$ that are not in $\lin{c}$. If $b\in \dom{\pi'^{-1}\circ\pi}\setminus\lin{c}$ then $b\in \dom{\perm{\Upsilon_{A,\lin{c_0}}|_X}}$, so by the definition of $[\cdot]_{\fresh}$, we have $\dom{\pi'^{-1}\circ\pi}\setminus\lin{c}\fresh X \subseteq [\Upsilon_{A,\lin{c_0}}]_\fresh$. Consequently, $\dom{\pi'^{-1}\circ\pi}\setminus\lin{c}\fresh X \subseteq [\Upsilon_{A,\lin{c_0}}]_\fresh,\lin{\lin{c}\fresh X}$. Additionally, $\lin{c}\fresh X \subseteq [\Upsilon_{A,\lin{c_0}}]_\fresh,\lin{\lin{c}\fresh X}$. Hence, $\dom{\pi'^{-1}\circ\pi} \subseteq [\Upsilon_{A,\lin{c_0}}]_\fresh,\lin{\lin{c}\fresh X}$, and the result follows by $(\fresh var)$ because $\dom{\pi'^{-1}\circ\pi} = \tf{ds}(\pi,\pi')$.
                    \end{itemize}

                    \item[$(\Leftarrow):$] The last rule applied is $(\fresh var)$:
                    \begin{prooftree}
                        \AxiomC{$\tf{ds}(\pi,\pi')\fresh X\subseteq [\Upsilon_{A,\lin{c_0}}]_\fresh,\lin{\lin{c}\fresh X}$}
                        \RightLabel{ $(\fresh var)$}\UnaryInfC{$[\Upsilon_{A,\lin{c_0}}]_\fresh,\lin{\lin{c}\fresh X}\vdash \pi\act X \approx \pi'\act X$}
                    \end{prooftree}

                    We want to show that $ \new \lin{c}.\Upsilon_{A,\lin{c_0}}\vdash\pi\act X\falphaeq{} \pi'\act X$. In order to do this, we must show that $\dom{\pi^{-1}\circ\pi}\setminus\lin{c}\subseteq \dom{\perm{\Upsilon_{A,\lin{c_0}}|_X}}$.

                    From $\tf{ds}(\pi,\pi')\fresh X\subseteq [\Upsilon_{A,\lin{c_0}}]_\fresh,\lin{\lin{c}\fresh X}$, we have $\tf{ds}(\pi,\pi')\setminus\lin{c} \fresh X \subseteq [\Upsilon_{A,\lin{c_0}}]_\fresh|_X$. By the translation $[\act]_{\fresh}$, we have that every atom in $\tf{ds}(\pi,\pi')\setminus\lin{c}$ is in $\dom{\perm{\Upsilon_{A,\lin{c_0}}|_X}}$. Then the result follows because $\tf{ds}(\pi,\pi') = \dom{\pi'^{-1}\circ\pi}$.\qed

                     \end{description}
                \end{enumerate}
\end{proof}

The following example shows that not all derivable strong fixed-point judgments can be mapped to a freshness judgement in the expected way.
\begin{example}\label{ex:fix_not_fresh}
It is easy to see that the judgement $\new c_1,c_2. (a \ c_1)\fix{} X,(b\ c_2)\fix{} X \vdash (a \ b)\fix{} X$ is derivable using the rule $(\fix{} \textbf{var})$ from Figure~\ref{fig:fixed-rules_new2} for $\dom{(a\ b)}\setminus \{c_1,c_2\}\subseteq \{a,b\}$. However, since neither $a$ nor $b$ are new names, we cannot use our translation to derive a freshness judgement as in Theorem~\ref{thm:transl2}(i). It follows from Theorem~\ref{thm:strong_correctness} (correctness) that there exists a derivation of the judgement $\new c_1,c_2. (a \ c_1)\fix{} X,(b\ c_2)\fix{} X \vdash (a \ b)\cdot X\falphaeq{}X$, and by Theorem~\ref{thm:transl2}(ii), there is a derivation of an equivalent $\alpha$-equality judgement via $\#$.
\end{example}

Since our rules in \Cref{fig:fixed-rules_new2} do not include a rule to deal with equational theories,  the following result holds for the theory $\T=\COREf$:

\begin{theorem}\label{thm:inherited soundness}
Nominal sets are sound denotational semantics for the nominal algebra with strong fixed-point constraints and judgements.
\end{theorem}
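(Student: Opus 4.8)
The plan is to reduce the statement to the soundness-style argument already carried out for strong models in Theorem~\ref{thm:soundness_fix}, but now observing that the new strong fixed-point contexts exactly encode the side conditions that failed in the original system. Concretely, one has to set up the semantics for strong judgements: given a nominal (not necessarily strong) $\Sigma$-algebra $\nalg{A}$ and a valuation $\varsigma$, interpret $\new\lin{c}.\stupsilon{A}{\lin{c_0}}$ as valid at $\varsigma$ iff for each constraint $\new c.(a\ c)\fix{} X$ in the context, $(a\ c')\fix{\tt sem}\Int{X}{\nalg{A}}{\varsigma}$ holds \emph{for some/any fresh $c'$} — equivalently, by Lemma~\ref{thm:support-char}, iff $a\notin\supp{}{\Int{X}{\nalg{A}}{\varsigma}}$. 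This is the key point: a strong fixed-point constraint semantically means freshness, and freshness is well-behaved in all nominal sets. The validity of a strong judgement $\new\lin{c}.(\stupsilon{A}{\lin{c_0}}\vdash\pi\fix{}t)$ is then: for all $\varsigma$, if the context is valid at $\varsigma$ then $\pi\fix{\tt sem}\Int{t}{\nalg{A}}{\varsigma}$ (and similarly for $\falphaeq{}$, with the $\new\lin{c}$ on the right handled by quantifying over fresh interpretations of those names).

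Next I would prove soundness by induction on the derivation in Figure~\ref{fig:fixed-rules_new2}, examining the last rule. The rules $(\fix{}\textbf{a})$, $(\fix{}{\tt f})$, $(\falphaeq{}\textbf{a})$, $(\falphaeq{}{\tt f})$, $(\falphaeq{}\textbf{[a]})$ are immediate from the definitions of the interpretation and equivariance (Lemma~\ref{lem:equivariant-interpretation}); the abstraction rules $(\fix{}\textbf{abs})$ and $(\falphaeq{}\textbf{ab})$ require the freshness property of $\abs^{\nalg{A}}$ (item~3 of Definition~\ref{def:sigma-algebra-with-nominal-set}) together with the standard nominal-set manipulation of swapping in a fresh name, exactly as for the freshness-based system of~\cite{DBLP:journals/logcom/GabbayM09}. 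The crucial case is $(\fix{}\textbf{var})$ with side condition $\dom{\pi^{\rho^{-1}}}\setminus\lin{c}\subseteq\dom{\perm{\stupsilon{A}{\lin{c_0}}|_X}}$: from the context being valid we get $a\notin\supp{}{\Int{X}{\nalg{A}}{\varsigma}}$ for every $a\in\dom{\perm{\stupsilon{A}{\lin{c_0}}|_X}}$, hence $\dom{\pi^{\rho^{-1}}}$ is disjoint from $\supp{}{\Int{X}{\nalg{A}}{\varsigma}}$ once we also use that the $\lin{c}$'s are fresh (chosen outside that support), and then Lemma~\ref{lem:dom-supp-empty} gives $\pi^{\rho^{-1}}\fix{\tt sem}\Int{X}{\nalg{A}}{\varsigma}$, which rearranges — via the same conjugation identity used at the end of the proof of Theorem~\ref{thm:soundness_fix} — to $\pi\fix{\tt sem}\Int{\rho\act X}{\nalg{A}}{\varsigma}$. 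The analogous $(\falphaeq{}\textbf{var})$ case is handled the same way with $\tf{ds}$ in place of $\dom{\pi^{\rho^{-1}}}$.

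The payoff of using strong fixed-point contexts is precisely that here we invoke Lemma~\ref{lem:dom-supp-empty} (valid in \emph{all} nominal sets) rather than Theorem~\ref{thm:strong-theorem} (valid only in strong ones): the offending counterexamples of Section~\ref{sec:counter_exam} cannot be reproduced because the premise of $(\fix{}\textbf{var})$ no longer lets us conclude a fixed-point from arbitrary permutations fixing $X$ — it only lets us conclude one from the \emph{atoms being fresh}, and "fresh plus disjoint domain implies fixed" is unconditionally true. I expect the main obstacle to be bookkeeping rather than mathematical depth: making the semantics of the $\new\lin{c}$-prefix precise and checking that "some fresh $c'$" and "every fresh $c'$" coincide (a routine equivariance/finite-support argument), and then threading the freshness of the $\lin{c}$ atoms consistently through the abstraction rules so that the set-difference $\setminus\lin{c}$ in the side conditions lines up with "choose the witnesses outside the relevant supports". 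Once the definitions are pinned down, every inductive case is a short calculation, and since $\T=\COREf$ has no axioms there is no $(\textbf{ax})$ case to worry about.
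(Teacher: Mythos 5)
Your proposal is correct in its essentials but takes a genuinely different route from the paper. The paper proves Theorem~\ref{thm:inherited soundness} by \emph{reduction}: it invokes the translations $[\cdot]_{\fix{}}$ and $[\cdot]_{\#}$ (Theorems~\ref{thm:transl1} and~\ref{thm:transl2}, together with Correctness, Theorem~\ref{thm:strong_correctness}, for fixed-point constraints such as $(a\ b)\fix{} X$ that are not directly freshness-shaped, cf.\ Example~\ref{ex:fix_not_fresh}) and then cites the known soundness of freshness-based nominal algebra over nominal sets from Gabbay--Mathijssen. You instead give a \emph{direct} inductive proof over the rules of Figure~\ref{fig:fixed-rules_new2}, after first pinning down an explicit semantics for strong contexts --- reading $\new c.(a\ c)\fix{} X$ as $a\notin\supp{}{\Int{X}{\nalg{A}}{\varsigma}}$ via Lemma~\ref{thm:support-char} --- and your key observation is exactly the right one: the $(\fix{}\textbf{var})$ case now needs only Lemma~\ref{lem:dom-supp-empty} (disjoint domain from support implies fixed point, true in every nominal set) instead of Theorem~\ref{thm:strong-theorem} (true only in strong ones), which is precisely why the counterexamples of Section~\ref{sec:counter_exam} cannot recur. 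What each approach buys: the paper's reduction is shorter and reuses established results, but it leaves the model-theoretic meaning of strong judgements implicit (validity is only defined through the translation, and general constraints must detour through Correctness and the $\falphaeq{}$ translation); your direct proof is self-contained and assigns a uniform semantics to \emph{all} derivable strong judgements, at the cost of the bookkeeping you correctly identify --- formalising the $\new\lin{c}$ prefix, checking the some/any-fresh equivalence, and threading the freshness of the $\lin{c}$ through the abstraction cases. Neither gap is mathematical; both are matters of setup, and your sketch would go through once those definitions are fixed.
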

\begin{proof}
    The proofs follow from the translation $[\cdot]_{\fix{}}$ and from the fact that nominal sets are sound denotational semantics for the nominal algebra with freshness constraints~\cite{DBLP:journals/logcom/GabbayM09}.
\end{proof}



\subsection{Using More Powerful Rules}\label{ssec:group_gen}
In this section, we discuss an alternative approach to nominal algebra with fixed-point constraints that is also sound for all nominal set models. 
The changes proposed in \Cref{ssec:strong_judgements} focused on restricting the form of the fixed-point contexts, which  became equivalent to freshness contexts. Alternatively, we can keep the judgements and rules in~\Cref{fig:equality-rules1} as much as possible untouched and adapt only the problematic rules. Thus, in this subsection, we propose a change in the rule $(\fix{}\textsf{var})$ where we consider the permutation group generated by the permutations in the fixed-point context (denoted $\langle \perm{\Upsilon|_X}\rangle$) and the other rules are left as in~\Cref{fig:equality-rules1}. The new rule is called {\sc Gvar}$_{\fix{}}$:
%
%
\begin{mathpar}
    \inferrule*[right=Gvar$_{\fix{}}$]{\pi^{\rho^{-1}} \in \langle \perm{\Upsilon|_X}\rangle}{\Upsilon\vdash \pi\fix{} \rho \cdot X}
\end{mathpar}
In the proof of soundness (Theorem~\ref{thm:soundness_fix}), we used the hypothesis of $\nalg{A}$ being a {strong} nominal algebra in the proof of the variable case.  The next theorem shows that for the new set of rules, we do not need this hypothesis any longer.


\begin{theorem}[Soundness]\label{thm:soundness_fix-group-generated}
    Suppose $\T=(\Sigma,Ax)$ is a theory. Then
    \begin{enumerate}
        \item If $\Upsilon \vdash \pi\fix{} t$ then $\Upsilon \vDash_\T \pi\fix{} t$.

        \item  If $\Upsilon \vdash_\T t=u$ then $\Upsilon \vDash_\T t = u$.
\end{enumerate}
    
\end{theorem}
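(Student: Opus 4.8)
The plan is to proceed by induction on the derivation of $\Upsilon \vdash \pi\fix{} t$ (for part 1) and on the derivation of $\Upsilon \vdash_\T t=u$ (for part 2), exactly mirroring the structure of the proof of Theorem~\ref{thm:soundness_fix}, since the only rule that has changed is $(\fix{}\textbf{var})$, now replaced by {\sc Gvar}$_{\fix{}}$. All cases whose last rule is unchanged — $(\fix{}\textbf{a})$, $(\fix{}{\tt f})$, $(\fix{}\textbf{abs})$, $(\textbf{refl})$, $(\textbf{symm})$, $(\textbf{tran})$, $(\textbf{cong}[])$, $(\textbf{cong}{\tt f})$, $(\textbf{ax})$, $(\textbf{perm})$ and $(\textbf{fr})$ — go through verbatim as in the proof of Theorem~\ref{thm:soundness_fix}, since their soundness arguments never invoked strongness of $\nalg{A}$; indeed the only place Theorem~\ref{thm:strong-theorem} (and hence the strong-nominal-set hypothesis) was used was precisely the $(\fix{}\textbf{var})$ case. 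So the heart of the proof is the new base case.

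For the {\sc Gvar}$_{\fix{}}$ case we are given $\pi^{\rho^{-1}} \in \langle \perm{\Upsilon|_X}\rangle$ and must show, for an arbitrary nominal-set model $\nalg{A}$ and valuation $\varsigma$ with $\Int{\Upsilon}{\nalg{A}}{\varsigma}$ valid, that $\pi\fix{\tt sem}\Int{\rho\cdot X}{\nalg{A}}{\varsigma}$, i.e. $\pi\act(\rho\act\varsigma(X)) = \rho\act\varsigma(X)$. Using the permutation-action axioms and conjugation, this is equivalent to $\pi^{\rho^{-1}}\act\varsigma(X) = \varsigma(X)$, i.e. $\pi^{\rho^{-1}}\fix{\tt sem}\varsigma(X)$ — this is the same equivalence already noted at the end of the proof of Theorem~\ref{thm:soundness_fix}. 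Now the key observation: the set $\mathrm{Stab}(\varsigma(X)) := \{\gamma\in\Perm{\A}\mid \gamma\act\varsigma(X) = \varsigma(X)\}$ is a \emph{subgroup} of $\Perm{\A}$ (it is the stabiliser of the element $\varsigma(X)$ under the group action), closed under composition and inverses because $\gamma\act x = x$ and $\delta\act x = x$ give $(\gamma\circ\delta)\act x = \gamma\act(\delta\act x) = x$, and $\gamma\act x = x$ gives $\gamma^{-1}\act x = x$. Validity of $\Int{\Upsilon}{\nalg{A}}{\varsigma}$ says precisely that every generator $\gamma_i\in\perm{\Upsilon|_X}$ lies in this subgroup; hence the whole generated subgroup $\langle\perm{\Upsilon|_X}\rangle$ is contained in $\mathrm{Stab}(\varsigma(X))$. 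Since $\pi^{\rho^{-1}}\in\langle\perm{\Upsilon|_X}\rangle$ by hypothesis, we conclude $\pi^{\rho^{-1}}\in\mathrm{Stab}(\varsigma(X))$, i.e. $\pi^{\rho^{-1}}\fix{\tt sem}\varsigma(X)$, as required. Note this argument works in \emph{any} nominal set, not just strong ones: it exploits only that stabilisers are subgroups, whereas the old rule $(\fix{}\textbf{var})$ only guaranteed $\dom{\pi^{\pi'^{-1}}}\subseteq\dom{\perm{\Upsilon|_X}}$, a strictly weaker condition that fails to keep $\pi^{\pi'^{-1}}$ inside the stabiliser when the set is not strong (cf. the counterexample after Theorem~\ref{thm:strong-theorem}).

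The remaining cases of part 2 that invoke a fixed-point subderivation — namely $(\textbf{perm})$, $(\textbf{fr})$, and the side condition of $(\textbf{ax})$ — appeal to part 1 via the induction hypothesis, which now holds for arbitrary (not necessarily strong) nominal-set models, so they close without change. I do not anticipate a genuine obstacle here: the only subtlety is the bookkeeping to make sure that the conjugation identity $\pi\act\rho\act x = \rho\act x \iff \rho^{-1}\circ\pi\circ\rho \act x = x$ and the definition $\pi^{\rho^{-1}} = \rho^{-1}\circ\pi\circ\rho$ are applied with the conventions used in the paper (the paper defines $\pi^\rho = \rho\circ\pi\circ\rho^{-1}$, so $\pi^{\rho^{-1}} = \rho^{-1}\circ\pi\circ\rho$), and to confirm that the soundness proofs of all the unchanged rules in Theorem~\ref{thm:soundness_fix} indeed never used strongness — a quick inspection of that proof confirms that Theorem~\ref{thm:strong-theorem} is invoked only in the $(\fix{}\textbf{var})$ case. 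Hence the expected "main obstacle" is merely verifying this locality claim; everything else is routine.
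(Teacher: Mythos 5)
Your treatment of the {\sc Gvar}$_{\fix{}}$ case is essentially the paper's own argument: the paper likewise reduces the goal to $\pi^{\rho^{-1}}\fix{\tt sem}\varsigma(X)$ and concludes from the facts that every generator $\gamma_i\in\perm{\Upsilon|_X}$ fixes $\varsigma(X)$ and that $\pi^{\rho^{-1}}$ lies in the generated group. Your phrasing via the stabiliser $\mathrm{Stab}(\varsigma(X))$ being a subgroup is actually cleaner than the paper's, which writes an element of $\langle\perm{\Upsilon|_X}\rangle$ as $\gamma_1^{\alpha_1}\circ\cdots\circ\gamma_n^{\alpha_n}$ (not every element of a generated group has that shape); closure of the stabiliser under composition and inverses — your observation — is exactly what makes the conclusion legitimate.

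The gap is in your ``locality claim''. It is not true that Theorem~\ref{thm:strong-theorem} (hence strongness) is invoked only in the $(\fix{}\textbf{var})$ case of the proof of Theorem~\ref{thm:soundness_fix}: the detailed proof in the appendix also invokes Theorem~\ref{thm:strong-theorem} in the $({\bf fr})$ case, and invokes strongness of the carrier directly in the $({\bf perm})$ case. The $({\bf perm})$ case is repairable for arbitrary nominal sets: from $(a\ c_1)\fix{\tt sem}\Int{t}{}{\varsigma}$ with $c_1\notin\supp{}{\Int{t}{}{\varsigma}}$ one obtains $a\notin\supp{}{\Int{t}{}{\varsigma}}$ by equivariance of support, with no appeal to strongness. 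But $({\bf fr})$ is a genuine problem: its side condition $\dom{\pi}\subseteq\dom{\perm{\Upsilon|_X}}$ does not guarantee $\pi\fix{\tt sem}\varsigma(X)$ in a non-strong model (take $\Upsilon=\{(a\ b)\fix{}X,\ (c\ d)\fix{}X\}$, $\pi=(a\ c)$ and $\varsigma(X)=\{a,b\}$ in $\pow{\tt fin}{\A}$), so the induction hypothesis for the premise $\Upsilon,\pi\fix{}X\vdash t=u$ cannot be discharged and that case of the induction does not close. Either $({\bf fr})$ must be restricted analogously to the variable rule (e.g.\ requiring $\pi\in\langle\perm{\Upsilon|_X}\rangle$) or a separate argument is needed. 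To be fair, the paper's own proof of this theorem only presents the {\sc Gvar}$_{\fix{}}$ case and is silent on $({\bf fr})$ as well; but since you explicitly assert having verified that strongness is used nowhere else, the claim as stated is incorrect and needs repair.
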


\begin{proof} In both cases, the proof is by induction on the rule applied. The interesting case is for the rule
{\bf {\sc Gvar}$_{\fix{}}$.}
   Then the  derivation is 
        \begin{prooftree}
            \AxiomC{$\pi^{\rho^{-1}}\in \langle\perm{\Upsilon|_X}\rangle$}
            \RightLabel{{\sc Gvar}$_{\fix{}}$}
            \UnaryInfC{$\Upsilon \vdash \pi\fix{} \rho\act X$}
        \end{prooftree}
        
        Suppose  that $\Int{\Upsilon}{\nalg{A}}{\varsigma}$ is valid. We need to show that $\pi\act \Int{\rho\act X}{\nalg{A}}{\varsigma} = \Int{\rho\act X}{\nalg{A}}{\varsigma}$.

        If $\perm{\Upsilon|_X} = \{\gamma_1,\ldots,\gamma_n\}$, then from the inclusion $\pi^{\rho^{-1}}\in \langle\perm{\Upsilon|_X}\rangle$,
         we get $\pi^{\rho^{-1}}=\gamma_1^{\alpha_1}\circ\ldots \circ \gamma_n^{\alpha_n}$, for $\alpha_1,\ldots,\alpha_n\in \mathbb{N}$.
        Moreover, the validity of $\Int{\Upsilon}{\nalg{A}}{\varsigma}$ implies that $\gamma_i\fix{\tt sem} \Int{X}{\nalg{A}}{\varsigma}$  for every $\gamma_i\fix{} X\in \Upsilon$. Thus, $\pi^{\rho^{-1}}\act\Int{X}{\nalg{A}}{\varsigma} = \Int{X}{\nalg{A}}{\varsigma}$ follows from the fact that it is in the generated group.\qed
\end{proof}

\subsubsection{Expressive Power}
The new  {\sc Gvar}$_{\fix{}}$ rule is sound, but more restrictive. The next two examples (Example~\ref{ex:newvar2} and Example~\ref{ex:newvar1}) are about judgements that are derivable using the rules in~\Cref{fig:equality-rules1}, but not with the new version using  {\sc Gvar}$_{\fix{}}$. These examples illustrate two different aspects where the group of generated permutations is more restrictive.

\begin{example} \label{ex:newvar2} It was possible to derive $(a \ b) \fix{} X, (c\ d)\fix{}X \vdash (a \ c)\fix{} X$  with the rules in~\Cref{fig:equality-rules1} including ({\bf var}$_{\fix{}}$), however, it is not possible to derive this judgement using the new  {\sc Gvar}$_{\fix{}}$ rule because $(a \ c)$ is not in the group generated by the permutations $(a \ b)$ and $(c \ d)$, i.e.,  $(a \ c)\notin \langle (a \ b),(c \ d)\rangle =\{id, (a \ b),(c \ d),(a \ b)\circ(c \ d)\}$.
\end{example}

We recall the fact that the new names are not annotated with the quantifier $\new$ in~\Cref{fig:equality-rules1}, the information about new names is relegated to the meta-language as usual when discussing languages with binders (such as the $\lambda$-calculus). As we can see below, this `sugared' notation (without \new) for new names causes some problems:

\begin{example}\label{ex:newvar1} In this example we want to mimic the derivation in~\Cref{fig:freshness_rules} of $a,b\# X \implies (a \ b) \cdot [a]X\approx [a]X$.  Suppose $\Upsilon=\{(b \ d')\fix{} X\}$. It is possible to derive $(b \ d')\fix{} X \vdash (a \ b)\act [a]X = [a]X$ using $\fix{}$ via the rules of~\Cref{fig:equality-rules1}. However, with the new rule {\sc Gvar}$_{\fix{}}$ the same judgement is not derivable: The first difficulty is that we can't inform $\Upsilon$ that $b$ is fresh, the formal way would be to have the constraint 
$\new d'. (b \ d')\fix{} X\in \Upsilon$. In the meta-language, this reduces to $(b \ d')\fix{} X\in \Upsilon$ and records the information `where $d'$ is a new name'. However, this is not satisfactory since $d'$ can be {\em any} new name, as we can see below.
{\small 
\begin{prooftree}
        \AxiomC{$(c_3\ c_1)\notin \langle (c_1 \ c_2), (c_3 \ c_4), (b\ d') \rangle$}
        \UnaryInfC{$\Upsilon,(c_1 \ c_2)\fix{}X, (c_3 \ c_4)\fix{}X \vdash (a \ c_1)\fix{} (a \ c_3)\cdot X$}
        \UnaryInfC{$\Upsilon,(c_1 \ c_2)\fix{}X \vdash (a \ c_1)\fix{} [a]X$}
         \AxiomC{$(b \ d_1)\notin \langle (d_1 \ d_2), (c_3' \ c_4'), (b\ d') \rangle$}
         \UnaryInfC{$\Upsilon,(d_1 \ d_2)\fix{}X, (c_3' \ c_4')\fix{}X \vdash (b \ d_1)\fix{} (a \ c_3')\cdot X$}
        \UnaryInfC{$\Upsilon,(d_1 \ d_2)\fix{}X \vdash (b \ d_1)\fix{} [a]X$}
         \RightLabel{$({\bf perm})$}
         \BinaryInfC{$\Upsilon \vdash (a \ b)\act [a]X = [a]X$}
\end{prooftree}
}
\noindent where $c_1,c_2,c_3,c_4,d_1,d_2,c_3',c_4'$ are new/fresh names. 

Notice that none of the sub-derivation (in the branches) are possible. However, looking closely, the conditions needed to close the above derivation, i.e., $(c_3\ c_1)\in \langle (c_1 \ c_2), (c_3 \ c_4), (b\ d') \rangle$  and $(b \ d_1)\in \langle (d_1 \ d_2), (c_3' \ c_4'), (b\ d') \rangle$, aim to check whether $b$ and $c_3$ are fresh for $X$, and this information could be derived from the context, however, using different new names ($(b \ d')$ is in $\Upsilon$ instead of $(b\ d_1)$). The problem is that the new names do not interact with the old names when generating the group of permutations.

\end{example}


We conjecture that a simple solution for both examples would consist of a combination of strong fixed-point contexts together with the generation of a larger group of permutations: including the group of permutations generated by the  new atoms of Example~\ref{ex:newvar1}, i.e., to consider the permutation group $\langle \Perm{\{c_1,c_2,c_3,c_4,c_3',c_4',d_1,d_2,d'\}},(b \ d') \rangle$. Similarly, for Example~\ref{ex:newvar2} we would need to specify the new atoms in the context and generate the permutation group of these new atoms. However, we need to check if relaxing  the group generator restriction in this way does not introduce inconsistencies.

\section{Discussion, Applications and Future Work}
\label{sec:applications}

In the standard nominal approach, using freshness constraints, nominal unification problems~\cite{DBLP:journals/tcs/UrbanPG04}, denoted $\uprob$, are finite sets of constraints of the form $s\approx^?t$ and $a\#^? t$. Solutions to (nominal) unification problems are pairs $(\Delta,\sigma)$ consisting of a freshness context $\Delta$ and a substitution $\sigma$ such that: (i) $\Delta\vdash s\sigma\approx t\sigma$, for each $s\approx^?t\in\uprob$; (ii) $\Delta\vdash a\# t\sigma$, for each $s\#^?t\in \uprob$; and (iii) $\Delta\vdash X\sigma\approx X\sigma\sigma$, for all $X\in \var{\uprob}$. Derivability in items (i)-(iii) is established by the rules in Figure~\ref{fig:freshness_rules}.

We are interested in nominal equational unification problems, that is, in solving equations in the ground term algebra when in addition to $\alpha$-equivalence and freshness constraints, one is also interested in equational theories (such as commutativity, associativity, etc). Nominal equational unification problems, denoted $\eprob{E}$, are finite sets of constraints $s\approx_{\tt E}^?t$ and $a\#^? t$. Solutions of $\eprob{E}$ are defined in terms of $\approx_{\tt E}$, as expected, and rely on extending the rules in Figure~\ref{fig:freshness_rules} with the appropriate rules for the equational theory ${\tt E}$. For example, in the case of the commutative theory {\tt C}: first we need to specify the commutative function symbol $\tf{f^C}$ and the axiom that establishes the commutativity for this symbol  ${\tt C}=\{~\vdash \tf{f^C}(X,Y)=\tf{f^C}(Y,X)\}$. Now we extend and modify the rules of \Cref{fig:freshness_rules} as follows: (i) replace $\approx$ for $\approx_{\tt C}$ in all the rules;
(ii) add the rule \inferrule{\Delta\vdash t_1\approx_{\tt C}s_{i}\\ \Delta\vdash t_2\approx_{\tt C}s_{3-i}}{\Delta\vdash \tf{f^C}(t_1,t_2)\approx_{\tt C} \tf{f^C}(s_1,s_2)}. Together, these rules implement $\alpha,{\tt C}$-equality. 

Nominal unification algorithms are defined in terms of simplification rules that can be seen as bottom-up applications of derivation rules defining $\approx$ and $\#$. More details can be found in~\cite{DBLP:journals/mscs/Ayala-RinconSFS21}.



 \begin{example}\label{exa:comm}Consider  the nominal {\tt C}-unification problem $\eprob{C}=\{\tf{f^C}(X,Y)\approx_{\tt C}^? \tf{f^C}(c,(a \ b) \cdot  X)\}$. This problem reduces to two  simpler problems: 
\begin{align*}
& \mathbin{\rotatebox[origin=r]{30}{$\Longrightarrow$}} \{X\approx_{\tt C}^?c, Y \approx_{\tt C}^? (a\ b)\cdot X\} \\
\eprob{C} &  \Longrightarrow\{Y\approx_{\tt C}^?c, X \approx_{\tt C}^? (a\ b)\cdot X \}
\end{align*}
The top branch has a solution $\{X\mapsto c, Y\mapsto c\}$. The bottom branch is more problematic due to the fixed-point equation 
$\{X\approx_{\tt C}^? (a \ b) \cdot  X\}$.

We have seen in previous works on nominal unification (without equational theories) that the usual simplification rule for  fixed-point equations on variables is 
 $\pi\cdot X\approx \pi'\cdot X\Longrightarrow ds(\pi,\pi') \# X$. 
 However, this rule is not complete for the commutative case,   since we lose solutions such as $X\mapsto \tf{f^C}(a,b)$, $X\mapsto \tf{f^C}(\tf{f^C}(a,b), \tf{f^C}(a,b))$, $X\mapsto \forall [c]g(\tf{f^C}(a,b), c)$, etc. We have shown in~\cite{DBLP:conf/frocos/Ayala-RinconSFN17} that there are infinite independent solutions for this problem. 

\end{example}
Currently, there are two approaches to solving nominal equational problems:
\begin{enumerate}
    \item Do not attempt to solve fixed-point equations and leave them as part of solutions~\cite{DBLP:journals/mscs/Ayala-RinconSFS21}. Thus, solutions of unification problems would be triples $(\Delta, \sigma, P)$, where $P$ is a finite set of fixed-point equations of the form $X\approx_{E}\pi\cdot X$, and we can use extension of \Cref{fig:freshness_rules} to validate solutions. 
    \item Use a different and more expressive formalism, using fixed-point constraints that capture fixed-point equations~\cite{DBLP:journals/lmcs/Ayala-RinconFN19}. Solutions to unification problems are again pairs $(\Upsilon,\sigma)$, but with fixed-point contexts, and solutions are validated using the rules in~\Cref{fig:equality-rules1} taking into account the axioms defining the theory ${\tt E}$.
\end{enumerate}

First, with the approach (i)  the solutions to Example~\ref{exa:comm} are expressed by the triples $(\emptyset, \{X\mapsto c, Y\mapsto c\}, \emptyset)$ and $(\emptyset, \{Y\mapsto c\}, \{X\approx_{\tt C}^? (a \ b) \cdot  X\})$. Second, with the approach (ii) the solutions to the problem in Example~\ref{exa:comm} are expressed by the pairs $(\emptyset, \{X\mapsto c, Y\mapsto c\})$ and $(\{(a \ b) \fix{}  X\}, \{Y\mapsto c\}\})$, and concrete instances $\delta$ of $X$ that validate $(a \ b) \fix{}  X$, when composed with $\{Y\mapsto c\}$ (i.e. $\{Y\mapsto c\}\delta$), are also solutions to the initial problem.



\subsection{A sound fixed-point approach for commutativity: group generator}

By Theorem~\ref{thm:soundness_fix}, fixed-point constraints are a sound approach for proving validity of judgements in strong models, and these do not accept permutative theories, such as commutativity. But we can reason modulo the theories defined by the axioms in Example~\ref{ex:strong_axioms}.

The use of more powerful rules proposed in \Cref{ssec:group_gen} might be useful for commutativity, although limited in terms of derivable judgements. First we consider the following modification of the rules of \Cref{fig:equality-rules1}:
\begin{itemize}
    \item Consider the {\bf (ax)} rule with the axiom ${\sf C}=\{\tf{f^C}(X,Y)=\tf{f^C}(Y,X)\}$, where $\tf{f^C}\in \Sigma$;
    \item Replace the {\bf ($\fix{}$var)} rule for {\sc Gvar}$_{\fix{}}$.
\end{itemize}

The soundness is trivial, as it is a direct consequence of Theorem~\ref{thm:soundness_fix-group-generated}. 



\subsection{Another  sound fixed-point approach for commutativity: strong judgements}
The use of strong judgements as proposed in \Cref{ssec:strong_judgements} also seems promising for dealing with commutativity. First we consider the following modification of the rules of \Cref{fig:fixed-rules_new2}:
\begin{itemize}
    \item Replace $\falphaeq{}$ for $\falphaeq{C}$ in all the rules.
    \item Replace $\fix{}$ for $\fix{\tt C}$ in all the rules.
    \item Add the following rule for commutativity
    \inferrule{\new \lin{c}.\stupsilon{A}{\lin{c_0}}\vdash t_1\falphaeq{C}s_i\\ \new \lin{c}.\stupsilon{A}{\lin{c_0}}\vdash t_2\falphaeq{C}s_{3-i}}{\new \lin{c}.\stupsilon{A}{\lin{c_0}}\vdash \tf{f^C}(t_1,t_2)\falphaeq{C}\tf{f^C}(s_1,s_2)}
\end{itemize}





We conjecture that we can extend the translations  defined in \Cref{ssec:strong_judgements} to take into account the commutativity theory in order to obtain soundness by extending Theorem~\ref{thm:inherited soundness}. The expected extension of the translation would map judgements of the form  $\new\lin{c}.\stupsilon{A}{\lin{c_0}}\vdash (a\ b)\fix{\tt C} X$ where $b$ is not $\new$-quantified (and that are not equivalent to a freshness judgments) to a fixed-point equation $(a \ b)\act X \falphaeq{\alpha,\C} X$. (cf. Example~\ref{ex:fix_not_fresh}). This provides a foundation for unification algorithms modulo $\alpha$ and $\C$.
\section{Conclusion}
\label{sec:conclusion}

We formalised the intentional semantics for fixed-point constraints and verified that strong nominal sets are a sound denotational semantics for nominal algebra with fixed-point constraints. We proposed alternatives to regain soundness for the whole class of nominal sets, by proposing more powerful rules or stronger judgements and new rules. Our developments provide a sound approach for solving nominal equational problems modulo strong theories. However, the best approach to deal with non-strong theories, such as the permutative theories, still have to be investigated. Investigations towards completeness are under development and will be reported in future work.


\bibliographystyle{entics}
\bibliography{references}

\end{document}